\documentclass[11pt]{article}

\usepackage[utf8]{inputenc}
\usepackage{amsmath}
\usepackage{amsfonts}
\usepackage{amssymb}
\usepackage{amsthm}
\usepackage{bbm}
\usepackage{enumitem}
\usepackage{geometry}
\usepackage{graphicx}
\usepackage{import}
\usepackage{listings}
\usepackage{multicol}
\usepackage{soul}
\usepackage{standalone}
\usepackage{titlesec}
\usepackage{tikz}

\newtheorem{Definition}{Definition}
\newtheorem{Theorem}{Theorem}
\newtheorem{Lemma}{Lemma}
\newtheorem{Observation}{Observation}
\newtheorem{Corollary}{Corollary}

\newcommand{\rubenAugust}[1]{\textcolor{teal}{#1}}

\newcommand{\MAF}{{\rm uMAF}}

\usetikzlibrary{decorations.pathreplacing}
\newcommand{\vertex}{\node[vertex]}
\tikzstyle{vertex}=[draw, shape=circle, minimum size=0.5em, inner sep=1, fill]

\title{Agreement forests of caterpillar trees: complexity, kernelization and branching}
\author{S. Kelk and R. Meuwese\footnote{Both authors are at the Department of Advanced Computing Sciences (DACS), Maastricht University, The Netherlands. Email: \texttt{steven.kelk@maastrichtuniversity.nl, r.meuwese@maastrichtuniversity.nl.}}}
\date{}

\begin{document}

\maketitle

\begin{abstract}
Given a set $X$ of species, a phylogenetic tree is an unrooted binary tree whose leaves are bijectively labelled by $X$. Such trees can be used to show the way species evolve over time.
One way of understanding how topologically different two phylogenetic trees are, is to construct a minimum-size agreement forest: a partition of $X$ into the smallest number of blocks, such that the blocks induce homeomorphic, non-overlapping subtrees in both trees.  This comparison yields insight into commonalities and differences in the evolution of $X$ across the two trees.
Computing a smallest agreement forest is NP-hard \cite{HeinJWZ96}. 
In this work we study the problem on caterpillars, which are path-like phylogenetic trees. 
We will demonstrate that, even if we restrict the input to this highly restricted
subclass, the problem remains NP-hard and is in fact APX-hard.
Furthermore we show that for caterpillars two standard reduction rules well known in the literature yield a tight kernel of size at most $7k$, compared to $15k$ for general trees  \cite{KelkL18}.
Finally we demonstrate that we can determine if two caterpillars have an agreement forest with at most $k$ blocks in $O^*(2.49^k)$ time, compared to $O^*(3^k)$ for general trees \cite{ChenFS15}, where $O^*(.)$ suppresses polynomial factors.
\end{abstract}

\section{Introduction}
In biology phylogenetic trees are commonplace. These are leaf-labelled trees which show how
the entities $X$ at the leaves - most commonly, but not exclusively, species - evolve over time \cite{SempleS03}. Unlabelled interior nodes of the trees represent points in time at which hypothetical common ancestors diversified into sub-lineages. Such trees are typically built from data that carries evolutionary signal, such as DNA sequences. However, in the real world there is no unique mapping from DNA sequences to the ``true'' tree; it depends on the quality of the available data, underlying biological phenomena and multiple model assumptions. Hence, for given data carrying evolutionary signal for a set of species $X$ it might be possible to generate multiple distinct, but equally plausible, trees, for $X$. A significant part of the phylogenetics literature is therefore dedicated to understanding when and why trees differ, a phenomenon known as \emph{incongruence} or \emph{discordance} \cite{degnan2009gene}.

One model for summarizing the topological difference of two trees is the \emph{agreement forest}.  Freely translated, an agreement forest is a summary of the topological building blocks common to both trees. More formally, an agreement forest is a partition of the leaf set $X$ such that each block induces the same topology in both trees and, within each tree, the induced subtrees do not overlap. More details will be provided in section \ref{Section Pre}.

A partitition of $X$ into singletons is vacuously a valid agreement forest, but this does not provide any insight. Rather, in the spirit of parsimony we wish to have an agreement forest with a minimum number of building blocks; this is called a \emph{maximum} agreement forest (MAF), so called because it is an agreement forest that maximizes the agreement between the two trees. Maximum agreement forests have been studied extensively in recent years, we refer to \cite{ChenSW16,BulteauW19,KelkL19} for overviews. Here we focus on the situation when the input consists of two unrooted, binary trees, writing uMAF to distinguish our problem from the rooted variant. Unfortunately, even in this limited setting finding the number of blocks in an uMAF, $d_\MAF(T,T')$, is an NP-hard problem \cite{HeinJWZ96}. Nevertheless, it is appealing to try to solve the problem in practice, not least because of its close relationship to several other measurements used to compare two phylogenetic trees. In particular, it is closely related to the Tree Bisection and Reconnection (TBR) distance,  $d_{TBR}(T, T')$, which (informally) counts the number of times a subtree has to be detached and reconnected to transform $T$ into $T'$; specifically, we have $d_{TBR}(T, T') = d_\MAF(T,T')-1$ \cite{AllenS01}. Distances such as TBR help us to understand the underlying connectivity of tree space \cite{john2017shape}. Interestingly, $d_{TBR}(T, T')$ is in turn exactly equal to the \emph{hybridization number} of $T$ and $T'$, which is the smallest value of $|E|-(|V|-1)$ ranging over all phylogenetic networks  $G=(V,E)$, generalizations of phylogenetic trees to graphs, that topologically simultaneously embed $T$ and $T'$ \cite{IerselKSSB18}. This graph-theoretic characterization has been central to recent parameterized complexity results for
$d_{TBR}$ and $d_\MAF(T,T')$ \cite{KelkL19}.

In this article we aim to develop a more fine-grained understanding of what makes computation of $d_\MAF$ challenging. We do this by restricting our attention to the problem when the input consists of two \emph{caterpillars}; these are path-like phylogenetic trees. An example is given in Figure \ref{Fig example of caterpillar}.

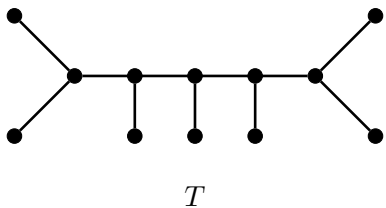
\begin{figure}[!h]
\begin{center}
\begin{tikzpicture}[scale = 0.8]

    \node (F1) at (3,-2) {$T$};
    \vertex  (a) at (0,1) {};
    \vertex  (b) at (0,-1) {};
    \vertex (ab) at (1,0) {};
    \vertex (cd) at (5,0) {};
    \vertex  (c) at (6,1) {};
    \vertex  (d) at (6,-1) {};
    
    \vertex (2x) at (2,0){};    
    \vertex (2) at (2,-1){};
    \vertex (3x) at (3,0){};    
    \vertex (3) at (3,-1){};
    \vertex (4x) at (4,0){};    
    \vertex (4) at (4,-1){};

    \draw [line width = 1pt]
    (ab) edge (cd)
    (ab) edge (a)
    (ab) edge (b)
    (cd) edge (c)
    (cd) edge (d)
    (2x) edge (2)
    (3x) edge (3)
    (4x) edge (4);
\end{tikzpicture}
\caption{Example of a caterpillar tree $T$.}
\label{Fig example of caterpillar}
\end{center}
\end{figure}

We prove several results.
In section \ref{Section NP Hard} we will prove that computing $d_\MAF$ for two caterpillars is NP-hard. In section \ref{Section APX Hard} we extend this result to APX-hardness, thus excluding the existence of a polynomial-time approximation scheme for computation of $d_\MAF$, unless P=NP. We note that the hardness is automatically inherited by the computation of $d_\MAF$ on general
trees. This is relevant because the APX-hardness of the general problem was stated, but not proven, in \cite{HeinJWZ96}. Our result thus closes this gap in the literature. In section \ref{Section Kernel} we will prove that for two caterpillars there is a tight $7k$ kernel, using just two reduction rules, where $k$ is equal to $d_\MAF$. Specifically: when applied exhaustively to two caterpillars, the well-known subtree and reduction rules yield a smaller pair of caterpillars that have at most $7k$ leaves. The same two reduction rules yield a tight $15k$ kernel on general trees \cite{KelkL18}. There are also smaller kernels of size $11k$ and $9k$ for general trees \cite{KelkL19,kelk2022deep}\footnote{In the articles \cite{KelkL18,KelkL19,kelk2022deep} $k$ refers to $d_{TBR}(T, T')$, not $d_\MAF(T,T')$, but given that the quantities only
differ by 1 this does not affect the multiplicative term in kernel sizes. Only the additive terms will differ by a constant.} but in order to obtain those smaller kernels far more complex reduction rules and analysis are needed. It is interesting that for caterpillars the subtree and chain reduction already bring us substantially below the smallest kernel for general trees. Next, in section \ref{Section Branching Alg} we will give a branching algorithm that is faster than, at the time of writing, the best branching algorithm by Chen et al. \cite{ChenFS15}. The algorithm of Chen et al. runs in time $O^*(3^k)$, on general trees, where the * suppresses polynomial factors. In contrast, our algorithm for two caterpillars runs in time $O^*(2.49^k)$.
Finally, in section \ref{sec:discussion} we conclude with a number of discussion points and open problems.

\section{Preliminaries}\label{Section Pre}
For general background on mathematical phylogenetics we refer to \cite{SempleS03, DressHKMS12}. An {\it unrooted binary phylogenetic $X$-tree} is an undirected tree $T =(V(T),E(T))$ where every internal vertex has degree 3 and whose leaves are bijectively labelled by a set $X$, where $X$ is often called the set of \emph{taxa} (representing the contemporary species, for example). We use $n$ to denote $|X|$ and often simply write phylogenetic tree when it is clear from the context that we are talking about an unrooted binary phylogenetic $X$-tree. Two phylogenetic trees $T, T'$ on $X$ are considered equal if there is an isomorphism between them that is the identity mapping on $X$ i.e. is label-preserving. A \emph{cherry} of a tree $T$ on $X$ is a pair of distinct taxa $x, y \in X$ which have a common parent in $T$. A tree $T$ on $|X| \geq 4$ taxa is a \emph{caterpillar} if it has exactly two cherries. For convenience we define all trees on $|X| \leq 3$ taxa to be caterpillars, too. An equivalent definition is that a tree $T$ is a caterpillar if, after deleting all taxa, the resulting tree is a path.

Let $T$ be a tree on $X$. For $X' \subseteq X$ we write $T[X']$ to denote the minimal subtree of $T$ spanning $X'$, and write $T|X'$ to denote the phylogenetic tree obtained from $T[X']$ by suppressing nodes of degree 2.

Let $T$ and $T'$ be two phylogenetic trees  on $X$. Let $F = \{ B_1,B_2,\ldots,B_k\}$ be a partition of $X$, where each block $B_i$ with $i\in\{1,2,\ldots,k\}$ is  referred to as a \emph{component} of $F$. We say that $F$ is an \emph{agreement forest} for $T$ and $T'$ if the following conditions hold.
\begin{enumerate}
\item [(1)] For each $i\in\{1,2,\ldots,k\}$, we have $T|B_i = T'|B_i$.
\item [(2)] For each pair $i,j\in\{1,2,\ldots,k\}$ with $i \neq j$, we have that $T[B_i]$ and $T[B_j]$ are vertex-disjoint in $T$, and $T'[B_i]$ and $T'[B_j]$ are vertex-disjoint in $T'$.
\end{enumerate}

Let $F=\{B_1,B_2,\ldots,B_k\}$ be an agreement forest for $T$ and $T'$. The \emph{size} of $F$ is simply its number of components, $k$. Moreover, an agreement forest with the minimum number of components (over all agreement forests for $T$ and $T'$) is called a \emph{maximum agreement forest (MAF)} for $T$ and $T'$. The number of components of a maximum agreement forest for $T$ and $T'$ is denoted by $d_\MAF(T,T')$. The \textsc{Unrooted Maximum Agreement Forest (uMAF)} problem is to compute $d_\MAF(T,T')$. It is NP-hard \cite{HeinJWZ96}, but permits a polynomial-time 3-approximation \cite{WhiddenBZ13,whidden2009unifying}.\\

\textbf{Subtrees and chains.} Let $T$ be a phylogenetic tree on $X$.  We say that a subtree of $T$ is {\it pendant} if it can be detached from $T$ by deleting a single edge. For $n\geq 2$, let $C = (\ell_1,\ell_2\ldots,\ell_n)$ be a sequence of distinct taxa in $X$. 
We call $C$ an $n$-chain of $T$ if there exists a walk $p_1,p_2,\ldots,p_n$ in $T$ and the elements in $p_2,p_3,\ldots,p_{n-1}$ are all pairwise distinct. Note that $\ell_1$ and $\ell_2$ may have a common parent and $\ell_{n-1}$ and $\ell_n$ may have a common parent. Furthermore, if  $p_1 = p_2$ or $p_{n-1} = p_n$
then $C$ is said to be {\it pendant} in $T$. To ease reading, we sometimes write $C$ to denote the set $\{\ell_1,\ell_2,\ldots,\ell_n\}$. It will always be clear from the context whether $C$ refers to the associated sequence or set of taxa. If a pendant subtree $S$ (resp. an $n$-chain $C$) exists in two phylogenetic trees $T$ and $T'$ on $X$, we say that $S$ (resp. $C$) is a {\it common} subtree (resp. chain) of $T$ and $T'$.

Let $F=\{B_1,B_1,B_2,\ldots,B_k\}$ be an agreement forest for two phylogenetic trees $T$ and $T'$ on $X$, and let $Y$ be a subset of $X$. We say that $Y$ is {\it preserved} in $F$ if there exists an element $B_i$ in $F$ with $i\in\{1, 2,\ldots,k\}$ such that $Y\subseteq B_i$. Later in the article we will make use of  the following theorem from \cite{KelkL19}, referred to as the {\it chain preservation theorem}.

\begin{Theorem}[\cite{KelkL19}]
\label{thm:allchainsintact}
Let $T$ and $T'$ be two phylogenetic trees on $X$. 
Let $K$ be an (arbitrary) set of mutually taxa-disjoint chains that are common to $T$ and $T'$.
Then there exists a maximum agreement forest $F$ of $T$ and $T'$ such that 
\begin{enumerate}
\item every $n$-chain in $K$ with $n\geq 3$
is preserved in $F$, and
\item every 2-chain in $K$ 
that is pendant in at least one of $T$ and $T'$
is preserved in $F$.
\end{enumerate}
\end{Theorem}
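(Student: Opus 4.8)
The plan is to prove the chain preservation theorem by an exchange argument, starting from an arbitrary maximum agreement forest $F$ of $T$ and $T'$ and transforming it step by step into one in which all the chains of $K$ are preserved, without ever increasing the number of components. Since the chains in $K$ are mutually taxa-disjoint, I would process them one at a time; the key is to show that ``repairing'' one chain never destroys the repairs already made to the others, which follows because each repair will be local to the taxa of the chain being repaired (and possibly a few neighbouring taxa that are not in any chain of $K$).

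\textbf{Setup for a single chain.} Fix a chain $C = (\ell_1,\ldots,\ell_n)$ in $K$. In $F$ the taxa of $C$ are distributed over some collection of components $B_{i_1},\ldots,B_{i_t}$. If $t=1$ we are done for this chain, so assume the taxa of $C$ are split. The idea is that consecutive taxa of a chain sit on a path in both $T$ and $T'$, so the way a component of $F$ can ``use'' part of the chain is highly constrained: the restriction of the relevant subtrees to $C$ must look like a sub-path in both trees. I would first argue that we may assume each component meets $C$ in a set of taxa that is contiguous along the chain order (otherwise two components would have to cross along the common path, contradicting the vertex-disjointness condition (2)); this breaks $C$ into maximal runs $C = C_1 C_2 \cdots C_r$ where each $C_j$ lies in a single component $B_{j}'$ of $F$, and distinct runs lie in distinct components.

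\textbf{The merge move.} Now I would show how to merge two adjacent runs. Take the component $B$ containing the last taxon of $C_j$ and the component $B^\ast$ containing the first taxon of $C_{j+1}$; I want to move all of $C_{j+1}, C_{j+2}, \ldots$ that currently sit in ``small'' components into $B$, while re-collecting the leftover taxa of those components into their own (possibly singleton) components. The number of components does not increase as long as, each time we absorb a whole run $C_{j+1}$ together with its companion component $B^\ast$ into $B$, the taxa of $B^\ast \setminus C_{j+1}$ that get ejected form at most one new component — which is where the pendant hypotheses enter. If $C$ has length $\geq 3$, the chain is ``long enough'' that one can always reroute and the ejected material stays connected as a single subtree in both trees; if $C$ has length $2$ but is pendant in at least one of $T, T'$, pendancy again guarantees the ejected part is a single pendant subtree on that side, and a symmetric/easier argument handles the other side. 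For a generic $2$-chain that is pendant in neither tree, this accounting can genuinely fail, which is exactly why the theorem excludes that case. I would verify throughout that conditions (1) and (2) are restored after the move: (1) because $B$ restricted to $C_j C_{j+1}$ is still a common sub-path, and (2) because we only ever merged components that were adjacent along the common path and ejected the rest into disjoint pieces.

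\textbf{Main obstacle.} The delicate part, and the one I would spend the most care on, is the bookkeeping of the ejected taxa in the merge move: showing that absorbing a run of the chain into the neighbouring component does not force the creation of \emph{two or more} new components (which would break the ``no increase in size'' invariant), and handling the boundary runs $C_1$ and $C_r$ where the chain may attach to genuinely different structure in $T$ versus $T'$. This is precisely where the distinction between $n\ge 3$ chains, pendant $2$-chains, and non-pendant $2$-chains becomes unavoidable, and getting the case analysis airtight — while keeping the moves local so that processing one chain of $K$ leaves the others untouched — is the crux of the argument. (Since this theorem is quoted from \cite{KelkL19}, in the paper itself it is simply cited rather than reproved; the above is how one would reconstruct its proof.)
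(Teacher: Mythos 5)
The paper does not prove this theorem; it imports it verbatim from \cite{KelkL19}, so the only fair comparison is with the known proof there, whose top-level shape your plan does match: an exchange argument that repairs one chain at a time without increasing the number of components, with locality guaranteeing that earlier repairs survive. However, as a proof your proposal has two genuine gaps. First, the contiguity claim is false as stated. A component $B$ can contain $\ell_1$ and $\ell_3$ while $\ell_2$ is a singleton component: $T[\{\ell_2\}]$ is a single leaf vertex, it does not occupy the spine vertex to which $\ell_2$ is attached, so there is no vertex-disjointness violation and no ``crossing''. Hence the taxa of a component need not form a contiguous run along the chain, and the decomposition $C = C_1C_2\cdots C_r$ with distinct runs in distinct components, on which the rest of your argument rests, is not available without further work (the real proof has to treat skipped singletons explicitly).

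Second, and more importantly, the merge move is where the entire content of the theorem lives, and you assert rather than prove it. Absorbing $C_{j+1}$ into $B$ requires checking that $B \cup C_{j+1}$ still induces the same topology in $T$ and $T'$ \emph{and} that its spanning subtrees remain vertex-disjoint from every other component in both trees; this can fail because some other component's spanning subtree may occupy the spine between $T[B]$ and the absorbed taxa, or occupy the corresponding region in $T'$, where the chain's surroundings can be entirely different. Simultaneously you must show the ejected taxa of $B^{\ast}\setminus C_{j+1}$ cost at most one new component. You defer all of this to ``one can always reroute'' and to the observation that the $n\ge 3$ / pendant-2-chain distinction ``enters here'', but that distinction is precisely what must be argued via a case analysis on how components protrude from the two ends of the chain; without it the argument is a restatement of the goal rather than a proof. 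So the proposal is a reasonable reconstruction of the strategy of \cite{KelkL19}, but it is not yet a proof.
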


\newpage

\section{NP-hardness}\label{Section NP Hard}

In this section we will prove the following theorem.

\begin{Theorem} \label{Thm uMAF is NP-Complete}
    uMAF on caterpillars is NP-complete.
\end{Theorem}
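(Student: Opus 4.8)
The plan is to prove NP-hardness by reduction from a known NP-hard problem, with the natural candidates being those with a ``linear'' or ``sequence-like'' structure that matches the path-like nature of caterpillars. Since uMAF on general trees is reduced from exact cover / 3-SAT in \cite{HeinJWZ96}, but those gadgets involve branching structure, I expect the cleanest route here is a reduction from a problem whose instances already look one-dimensional: for instance \textsc{Max-Cut} (which underlies the APX-hardness to come in the next section and hence would let both theorems share machinery), or a partition-type / string-matching-type problem. Concretely, I would try to encode an instance of (say) \textsc{Max-Cut} on a graph $G=(V,E)$ as two caterpillars $T,T'$ on a common taxon set $X$, built by concatenating, along the spine, small ``gadget chains'' — one block of taxa per vertex of $G$ and one per edge of $G$ — where the two caterpillars differ only locally within each gadget. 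The key design goal is that a maximum agreement forest is forced to ``cut'' each vertex gadget into one of two ways (the two sides of the cut), and that an edge gadget contributes one extra component to $d_{\MAF}$ precisely when its two endpoints were cut consistently (or inconsistently), so that $d_{\MAF}(T,T')$ equals some fixed function of $n$ plus (the number of edges minus the max-cut value), or similar.

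The first concrete step is to establish membership in NP: given a candidate partition $F$ of $X$, conditions (1) and (2) in the definition of agreement forest can be checked in polynomial time (build $T|B_i$ and $T'|B_i$ for each block and test for label-preserving isomorphism, then test pairwise vertex-disjointness of the spanning subtrees), so uMAF on caterpillars is in NP; this is routine. The second step is the gadget construction itself — specifying the spine order of the gadgets and, within each gadget, exactly how the taxa are arranged in $T$ versus $T'$ — making sure that (a) both resulting trees genuinely are caterpillars (exactly two cherries), and (b) the gadgets are ``taxa-disjoint chains'' so that the chain preservation theorem (Theorem~\ref{thm:allchainsintact}) applies and lets me assume, without loss of generality, that any MAF keeps each long gadget-chain intact, dramatically restricting the combinatorial possibilities. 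The third step is the forward direction: from a cut of $G$ of size $c$, explicitly build an agreement forest of the target size, which should be straightforward once the gadgets are right. The fourth and hardest step is the converse: taking an arbitrary MAF $F$ of $T$ and $T'$, use the chain preservation theorem to normalize it, then argue that its restriction to each vertex gadget behaves like a binary choice and that its behaviour on edge gadgets is governed by those choices, thereby extracting a cut of $G$ whose size is controlled by $|F|$.

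The main obstacle I anticipate is precisely that converse direction: unrooted agreement forests are notoriously ``floppy'' compared to rooted ones — a block can be routed through the spine in ways that don't respect the intended gadget boundaries — so the bulk of the work will be a careful case analysis showing that any deviation from the intended ``canonical'' forest structure can be locally repaired without increasing $|F|$. The chain preservation theorem is the crucial lever here, since by padding each gadget with a sufficiently long common chain I can force those chains to stay whole in some optimal $F$, which collapses the number of genuinely distinct local configurations to a manageable constant that I can enumerate. A secondary technical point to watch is arithmetic bookkeeping: I must pin down the additive constant relating $d_{\MAF}(T,T')$ to $|V|$, $|E|$ and the max-cut value exactly, and verify that the reduction is polynomial (gadget chains of length polynomial, indeed constant or logarithmic, in the instance size), so that NP-hardness — and, with the gap amplification available from \textsc{Max-Cut}, the APX-hardness of the following section — goes through cleanly.
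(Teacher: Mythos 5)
Your overall architecture --- reduce from a graph optimization problem, build the two caterpillars by concatenating per-vertex and per-edge gadget chains separated by padding chains, use Theorem~\ref{thm:allchainsintact} to normalize an optimal forest, and then read a graph-theoretic solution off the normalized forest --- does match the shape of the paper's argument (which reduces from maximum independent set on cubic graphs, not \textsc{Max-Cut}; the same source problem is reused for APX-hardness). However, what you have written is a plan rather than a proof, and the two steps you defer are exactly where the substance lies. The gadgets are never specified, and the paper's choice is not routine: each vertex $v$ contributes a chain $A_v$ in $T_G$ interleaving three vertex-taxa with six edge-taxa, while in $T'_G$ those same taxa are regrouped into a chain $D_v$ and chains $B_e$ shared with the other endpoint of each edge; the whole argument then rests on a counting lemma (6 versus 7 components needed to cover $A_v$, with a \emph{unique} 6-cover) whose uniqueness is what forbids two adjacent vertices from both being ``cheap''. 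Your \textsc{Max-Cut} variant would additionally need an edge gadget whose contribution depends on whether two distant local choices \emph{agree}, something agreement forests do not detect naturally; the MIS encoding works precisely because the constraint is a one-sided exclusion realized by taxa shared between $A_u$, $A_v$ and $B_e$.

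More fundamentally, your use of the chain preservation theorem is not strong enough for the compartmentalization you need. Preservation of a chain $C_i$ only guarantees that $C_i$ is \emph{contained in} some block of the forest; it does not prevent that block from also containing taxa of neighbouring gadgets, so preserved separator chains do not by themselves stop a component from straddling gadget boundaries. The paper closes this hole with an idea absent from your proposal: the separator chains occur in consecutive pairs $C_i, C_{i+1}$ whose orientations are \emph{mirrored} between $T_G$ and $T'_G$, and Lemma~\ref{Lemma Av unique 6 pieces}(i) uses this to force every preserved $C_i$ to be a block all by itself, which is what actually seals off each gadget. Without that (or an equivalent device) your converse direction cannot be reduced to a per-gadget enumeration. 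Your NP-membership argument is fine and matches the paper's.
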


Recall that an \emph{independent set} of an undirected graph $G=(V,E)$ is a set $I \subseteq V$ of mutually non-adjacent vertices. The problem of computing a maximum-size independent set (MIS) is a well-known NP-hard and APX-hard problem. We will establish our theorem by reducing from MIS on cubic graphs. It is well-known that in a cubic graph, the size of a MIS is at least $|V|/4$. Moreover, MIS remains NP-hard and APX-hard on cubic graphs \cite{alimonti2000some}.

Before we can prove our theorem we need to establish a lemma and an observation and describe how two caterpillars $T_G$ and $T'_G$ are built from a cubic graph $G$, which is the input to the MIS problem. 

Let $G=(V,E)$ be a cubic graph, where $n=|V|$ and $m=|E|=3n/2$. For each $v \in V$ we introduce three taxa $v_1, v_2, v_3$. For each vertex $v \in V$ and for each edge $e$ incident to $v$, we introduce two taxa $e^{v\rightarrow},  e^{v\leftarrow}$. Hence, there are in total $3n + (3n \cdot 2) = 9n$ taxa used to encode the actual graph. There will be an additional $6n + 6m= 6n+9n=15n$ taxa introduced that have an auxiliary function, so $24n$ taxa in total. The entire construction is as follows. See also Figure \ref{Fig example T and T' general}.

\begin{Definition} \label{Def reduction trees}
    Let $G=(V,E)$ be a cubic graph. 
    \begin{enumerate}
        \item[$A_v$]: For each vertex $v \in V$, let $e_1, e_2, e_3$ be the three edges it is incident to. We construct a chain $A_v$ as follows: $( e_1^{v\leftarrow}, v_1, e_1^{v\rightarrow}, 
        e_2^{v\leftarrow}, v_2, e_2^{v\rightarrow},
        e_3^{v\leftarrow}, v_3, e_3^{v\rightarrow} )$.

        \item[$B_e$]:    For each edge $e = \{u,v\}$, let $B_e$ be a chain $(e^{u\leftarrow},  e^{v\leftarrow},
        e^{u\rightarrow}, e^{v\rightarrow})$.

        \item[$D_v$]:    For each $v \in V$, let $D_v$ be a chain with the taxa $(v_1, v_2, v_3)$.

        \item[$C_i$]:    For $1\leq i \leq 2(n+m)$, let $C_i$ be a chain with 3 taxa with arbitrary labels. These chains will be mirrored (i.e. have opposite orientations) in 
        $T_G$ and $T'_G$.
        \item[$T_G$] :   Let $T_G$ be a caterpillar alternating each $A_v$ with two $C_i$; this uses the first $2n$ $C_i$ chains. Note that these chains are always in pairs of the form $C_i, C_{i+1}$ ($i$ odd). This is then followed by a block of the remaining $2m = 3n$ $C_i$ chains. 
                  \item[$T'_G$]:    Let $T'_G$ be a caterpillar which alternates each $D_v$ with two $C_i$, where the $C_i$ are mirrored with respect to their orientation in $T_G$. The second part consists of alternating each $B_e$ with two $C_i$, which are again mirrored with respect to their orientation in $T_G$. The $C_i$ chains in $T'_G$ are in the same pairs as in $T_G$.
    \end{enumerate}
\end{Definition}

\begin{figure}[!h]
\begin{center}
\begin{tikzpicture}[scale = 0.8]    
    \node (T1) at (2, 0) {$T_G$};
    \node (Au) [draw, minimum width=0.5cm, minimum height=0.5cm] at (4, 0) {$A_1$};
    \node (C1) [draw, minimum width=0.5cm, minimum height=0.5cm] at (5, 0) {$\overrightarrow{C_1}$};
    \node (C2) [draw, minimum width=0.5cm, minimum height=0.5cm] at (6, 0) {$\overrightarrow{C_2}$};

    \node (Av) [draw, minimum width=0.5cm, minimum height=0.5cm] at (8, 0) {$A_i$};
    \node (C3) [draw, minimum width=0.5cm, minimum height=0.5cm] at (9.5, 0) {$\overrightarrow{C_{2i-1}}$};
    \node (C4) [draw, minimum width=0.5cm, minimum height=0.5cm] at (11, 0) {$\overrightarrow{C_{2i}}$};
    
    \node (Aw) [draw, minimum width=0.5cm, minimum height=0.5cm] at (13, 0) {$A_n$};
    \node (C5) [draw, minimum width=0.5cm, minimum height=0.5cm] at (14.5, 0) {$\overrightarrow{C_{2n-1}}$};
    \node (C6) [draw, minimum width=0.5cm, minimum height=0.5cm] at (16, 0) {$\overrightarrow{C_{2n}}$};
    
    \draw [line width = 1pt]
    (Au) edge (C1)
    (C1) edge (C2)
    (C2) edge [dashed] (Av)
    (Av) edge (C3)
    (C3) edge (C4)
    (C4) edge [dashed] (Aw)
    (Aw) edge (C5)
    (C5) edge (C6)
    
    (C6) edge [dashed] (17.5, 0);

    \node (Au) [draw, minimum width=0.5cm, minimum height=0.5cm] at (5, -1) {$\overrightarrow{C_{2n+1}}$};
    \node (C1) [draw, minimum width=0.5cm, minimum height=0.5cm] at (7.5, -1) {$\overrightarrow{C_{5n}}$};

    \draw [line width = 1pt]
    (3.5,-1) edge [dashed] (Au)
    (Au) edge [dashed] (C1);
    
    \node (T1) at (2,-3) {$T'_G$};
    \node (Au) [draw, minimum width=0.5cm, minimum height=0.5cm] at (4, -2.5) {$D_1$};
    \node (C1) [draw, minimum width=0.5cm, minimum height=0.5cm] at (5, -2.5) {$\overleftarrow{C_1}$};
    \node (C2) [draw, minimum width=0.5cm, minimum height=0.5cm] at (6, -2.5) {$\overleftarrow{C_2}$};

    \node (Av) [draw, minimum width=0.5cm, minimum height=0.5cm] at (8, -2.5) {$D_i$};
    \node (C3) [draw, minimum width=0.5cm, minimum height=0.5cm] at (9.5, -2.5) {$\overleftarrow{C_{2i-1}}$};
    \node (C4) [draw, minimum width=0.5cm, minimum height=0.5cm] at (11, -2.5) {$\overleftarrow{C_{2i}}$};
    
    \node (Aw) [draw, minimum width=0.5cm, minimum height=0.5cm] at (13, -2.5) {$D_n$};
    \node (C5) [draw, minimum width=0.5cm, minimum height=0.5cm] at (14.5, -2.5) {$\overleftarrow{C_{2n-1}}$};
    \node (C6) [draw, minimum width=0.5cm, minimum height=0.5cm] at (16, -2.5) {$\overleftarrow{C_{2n}}$};
    
    \draw [line width = 1pt]
    (Au) edge (C1)
    (C1) edge (C2)
    (C2) edge [dashed] (Av)
    (Av) edge (C3)
    (C3) edge (C4)
    (C4) edge [dashed] (Aw)
    (Aw) edge (C5)
    (C5) edge (C6)
    
    (C6) edge [dashed] (17.5, -2.5);

    \node (Au) [draw, minimum width=0.5cm, minimum height=0.5cm] at (5, -3.5) {$B_1$};
    \node (C1) [draw, minimum width=0.5cm, minimum height=0.5cm] at (6.5, -3.5) {$\overleftarrow{C_{2n+1}}$};
    \node (C2) [draw, minimum width=0.5cm, minimum height=0.5cm] at (8.5, -3.5) {$\overleftarrow{C_{2(n+1)}}$};

    \node (Av) [draw, minimum width=0.5cm, minimum height=0.5cm] at (10.5, -3.5) {$B_j$};
    \node (C3) [draw, minimum width=0.5cm, minimum height=0.5cm] at (12.25, -3.5) {$\overleftarrow{C_{2(n+j)-1}}$};
    \node (C4) [draw, minimum width=0.5cm, minimum height=0.5cm] at (14.5, -3.5) {$\overleftarrow{C_{2(n+j)}}$};
    
    \node (Aw) [draw, minimum width=0.5cm, minimum height=0.5cm] at (16.5, -3.5) {$B_{m}$};
    \node (C5) [draw, minimum width=0.5cm, minimum height=0.5cm] at (18, -3.5) {$\overleftarrow{C_{5n-1}}$};
    \node (C6) [draw, minimum width=0.5cm, minimum height=0.5cm] at (19.5, -3.5) {$\overleftarrow{C_{5n}}$};
    
    \draw [line width = 1pt]
    (3.5,-3.5) edge [dashed] (Au)
    (Au) edge (C1)
    (C1) edge (C2)
    (C2) edge [dashed] (Av)
    (Av) edge (C3)
    (C3) edge (C4)
    (C4) edge [dashed] (Aw)
    (Aw) edge (C5)
    (C5) edge (C6);

\end{tikzpicture}
\caption{The caterpillars $T_G$ and $T'_G$ constructed from $G$. Both caterpillars have $24n$ taxa in total, where $n$ is the number of vertices in $G$.}
\label{Fig example T and T' general}
\end{center}
\end{figure}
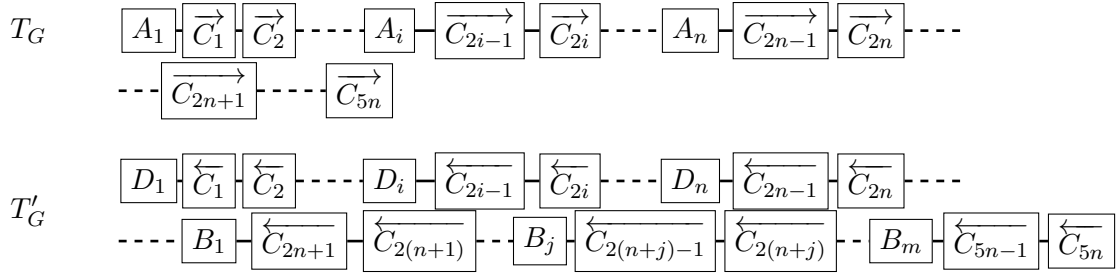

    In the hardness proof we will, given an uMAF $F$ of $T_G$ and $T'_G$, construct an independent set $I_F$ from it. $A_v$ will be used to determine whether vertex $v$ in $G$ is in this independent set or not. For each edge $e=\{u,v\}$, $B_e$ will be used to ensure that $u$ and $v$ are not both in the independent set.
    We return to this point later. First let us look at a small example for $G=K_4$. See Figures \ref{Fig example k4 general} and the corresponding extensions in Figures \ref{Fig example k4 extention T} and \ref{Fig example k4 extention T'}.
    
\begin{figure}[h!]
\begin{center}
\begin{tikzpicture}[scale = 0.8]
    \node (G) at (0,0) {$K_4$};
    \vertex [label=below:$c$] (u) at (-1,-3) {};
    \vertex [label=below:$b$] (v) at (1,-3) {};
    \vertex [label=above:$a$] (w) at (0,-1) {};
    \vertex [label=below:$v$] (x) at (0,-2) {};
    \draw [line width = 1pt]
    (x) edge (v)
    (x) edge (u)
    (x) edge (w)
    (u) edge (v)
    (v) edge (w)
    (w) edge (u);    
    	\draw [thick]
    	
	(1,-1.5) edge [->] node [label=above:Def \ref{Def reduction trees} ] {} (3,-1.5);
    
    \node (T1) at (10, 2) {$T_G$};
    \node (Au) [draw, minimum width=0.5cm, minimum height=0.5cm] at (4, 1) {$A_v$};
    \node (C1) [draw, minimum width=0.5cm, minimum height=0.5cm] at (5, 1) {$\overrightarrow{C_1}$};
    \node (C2) [draw, minimum width=0.5cm, minimum height=0.5cm] at (6, 1) {$\overrightarrow{C_2}$};

    \node (Av) [draw, minimum width=0.5cm, minimum height=0.5cm] at (7.5, 1) {$A_a$};
    \node (C3) [draw, minimum width=0.5cm, minimum height=0.5cm] at (8.5, 1) {$\overrightarrow{C_3}$};
    \node (C4) [draw, minimum width=0.5cm, minimum height=0.5cm] at (9.5, 1) {$\overrightarrow{C_4}$};
    
    \node (Aw) [draw, minimum width=0.5cm, minimum height=0.5cm] at (11, 1) {$A_b$};
    \node (C5) [draw, minimum width=0.5cm, minimum height=0.5cm] at (12, 1) {$\overrightarrow{C_5}$};
    \node (C6) [draw, minimum width=0.5cm, minimum height=0.5cm] at (13, 1) {$\overrightarrow{C_6}$};
    
    \node (Ax) [draw, minimum width=0.5cm, minimum height=0.5cm] at (14.5, 1) {$A_c$};
    \node (C7) [draw, minimum width=0.5cm, minimum height=0.5cm] at (15.5, 1) {$\overrightarrow{C_7}$};
    \node (C8) [draw, minimum width=0.5cm, minimum height=0.5cm] at (16.5, 1) {$\overrightarrow{C_8}$};
    
    \draw [line width = 1pt]
    (Au) edge (C1)
    (C1) edge (C2)
    (C2) edge (Av)
    (Av) edge (C3)
    (C3) edge (C4)
    (C4) edge (Aw)
    (Aw) edge (C5)
    (C5) edge (C6)
    (C6) edge (Ax)
    (Ax) edge (C7)
    (C7) edge (C8)

    (C8) edge [dashed] (17.5, 1);

    \node (Au) [draw, minimum width=0.5cm, minimum height=0.5cm] at (5, 0) {$\overrightarrow{C_{9}}$};
    \node (C1) [draw, minimum width=0.5cm, minimum height=0.5cm] at (6, 0) {$\overrightarrow{C_{10}}$};
    \node (C2) [draw, minimum width=0.5cm, minimum height=0.5cm] at (7, 0) {$\overrightarrow{C_{11}}$};

    \node (Av) [draw, minimum width=0.5cm, minimum height=0.5cm] at (8, 0) {$\overrightarrow{C_{12}}$};
    \node (C3) [draw, minimum width=0.5cm, minimum height=0.5cm] at (9, 0) {$\overrightarrow{C_{13}}$};
    \node (C4) [draw, minimum width=0.5cm, minimum height=0.5cm] at (10, 0) {$\overrightarrow{C_{14}}$};
    
    \node (Aw) [draw, minimum width=0.5cm, minimum height=0.5cm] at (11, 0) {$\overrightarrow{C_{15}}$};
    \node (C5) [draw, minimum width=0.5cm, minimum height=0.5cm] at (12, 0) {$\overrightarrow{C_{16}}$};
    \node (C6) [draw, minimum width=0.5cm, minimum height=0.5cm] at (13, 0) {$\overrightarrow{C_{17}}$};
    
    \node (Ax) [draw, minimum width=0.5cm, minimum height=0.5cm] at (14, 0) {$\overrightarrow{C_{18}}$};
    \node (C7) [draw, minimum width=0.5cm, minimum height=0.5cm] at (15, 0) {$\overrightarrow{C_{19}}$};
    \node (C8) [draw, minimum width=0.5cm, minimum height=0.5cm] at (16, 0) {$\overrightarrow{C_{20}}$};
    
    \draw [line width = 1pt]
    (3.5, 0) edge [dashed] (Au)
    (Au) edge (C1)
    (C1) edge (C2)
    (C2) edge (Av)
    (Av) edge (C3)
    (C3) edge (C4)
    (C4) edge (Aw)
    (Aw) edge (C5)
    (C5) edge (C6)
    (C6) edge (Ax)
    (Ax) edge (C7)
    (C7) edge (C8);
    
    \node (T1) at (10,-1) {$T'_G$};
    \node (Au) [draw, minimum width=0.5cm, minimum height=0.5cm] at (4, -2) {$D_v$};
    \node (C1) [draw, minimum width=0.5cm, minimum height=0.5cm] at (5, -2) {$\overleftarrow{C_1}$};
    \node (C2) [draw, minimum width=0.5cm, minimum height=0.5cm] at (6, -2) {$\overleftarrow{C_2}$};

    \node (Av) [draw, minimum width=0.5cm, minimum height=0.5cm] at (7.5, -2) {$D_a$};
    \node (C3) [draw, minimum width=0.5cm, minimum height=0.5cm] at (8.5, -2) {$\overleftarrow{C_3}$};
    \node (C4) [draw, minimum width=0.5cm, minimum height=0.5cm] at (9.5, -2) {$\overleftarrow{C_4}$};
    
    \node (Aw) [draw, minimum width=0.5cm, minimum height=0.5cm] at (11, -2) {$D_b$};
    \node (C5) [draw, minimum width=0.5cm, minimum height=0.5cm] at (12, -2) {$\overleftarrow{C_5}$};
    \node (C6) [draw, minimum width=0.5cm, minimum height=0.5cm] at (13, -2) {$\overleftarrow{C_6}$};
    
    \node (Ax) [draw, minimum width=0.5cm, minimum height=0.5cm] at (14.5, -2) {$A_c$};
    \node (C7) [draw, minimum width=0.5cm, minimum height=0.5cm] at (15.5, -2) {$\overleftarrow{C_7}$};
    \node (C8) [draw, minimum width=0.5cm, minimum height=0.5cm] at (16.5, -2) {$\overleftarrow{C_8}$};
    
    \draw [line width = 1pt]
    (Au) edge (C1)
    (C1) edge (C2)
    (C2) edge (Av)
    (Av) edge (C3)
    (C3) edge (C4)
    (C4) edge (Aw)
    (Aw) edge (C5)
    (C5) edge (C6)
    (C6) edge (Ax)
    (Ax) edge (C7)
    (C7) edge (C8)
    
    (C8) edge [dashed] (17.5, -2);

    \node (Au) [draw, minimum width=0.5cm, minimum height=0.5cm] at (5, -3) {$B_{\{v,a\}}$};
    \node (C1) [draw, minimum width=0.5cm, minimum height=0.5cm] at (6.5, -3) {$\overleftarrow{C_{9}}$};
    \node (C2) [draw, minimum width=0.5cm, minimum height=0.5cm] at (7.5, -3) {$\overleftarrow{C_{10}}$};

    \node (Av) [draw, minimum width=0.5cm, minimum height=0.5cm] at (9, -3) {$B_{\{v,b\}}$};
    \node (C3) [draw, minimum width=0.5cm, minimum height=0.5cm] at (10.5, -3) {$\overleftarrow{C_{11}}$};
    \node (C4) [draw, minimum width=0.5cm, minimum height=0.5cm] at (11.5, -3) {$\overleftarrow{C_{12}}$};
    
    \node (Aw) [draw, minimum width=0.5cm, minimum height=0.5cm] at (13, -3) {$B_{\{v,c\}}$};
    \node (C5) [draw, minimum width=0.5cm, minimum height=0.5cm] at (14.5, -3) {$\overleftarrow{C_{13}}$};
    \node (C6) [draw, minimum width=0.5cm, minimum height=0.5cm] at (15.5, -3) {$\overleftarrow{C_{14}}$};
    
    \draw [line width = 1pt]
    (3.5, -3) edge [dashed] (Au)
    (Au) edge (C1)
    (C1) edge (C2)
    (C2) edge (Av)
    (Av) edge (C3)
    (C3) edge (C4)
    (C4) edge (Aw)
    (Aw) edge (C5)
    (C5) edge (C6)
    
    (C6) edge [dashed] (17, -3);

    \node (Au) [draw, minimum width=0.5cm, minimum height=0.5cm] at (5, -4) {$B_{\{a,b\}}$};
    \node (C1) [draw, minimum width=0.5cm, minimum height=0.5cm] at (6.5, -4) {$\overleftarrow{C_{15}}$};
    \node (C2) [draw, minimum width=0.5cm, minimum height=0.5cm] at (7.5, -4) {$\overleftarrow{C_{16}}$};

    \node (Av) [draw, minimum width=0.5cm, minimum height=0.5cm] at (9, -4) {$B_{\{a,c\}}$};
    \node (C3) [draw, minimum width=0.5cm, minimum height=0.5cm] at (10.5, -4) {$\overleftarrow{C_{17}}$};
    \node (C4) [draw, minimum width=0.5cm, minimum height=0.5cm] at (11.5, -4) {$\overleftarrow{C_{18}}$};
    
    \node (Aw) [draw, minimum width=0.5cm, minimum height=0.5cm] at (13, -4) {$B_{\{b,c\}}$};
    \node (C5) [draw, minimum width=0.5cm, minimum height=0.5cm] at (14.5, -4) {$\overleftarrow{C_{19}}$};
    \node (C6) [draw, minimum width=0.5cm, minimum height=0.5cm] at (15.5, -4) {$\overleftarrow{C_{20}}$};
    
    \draw [line width = 1pt]
    (3.5, -4) edge [dashed] (Au)
    (Au) edge (C1)
    (C1) edge (C2)
    (C2) edge (Av)
    (Av) edge (C3)
    (C3) edge (C4)
    (C4) edge (Aw)
    (Aw) edge (C5)
    (C5) edge (C6);

\end{tikzpicture}
\caption{The construction of $T_G$ and $T'_G$ when $G=K_4$.}
\label{Fig example k4 general}
\end{center}
\end{figure}
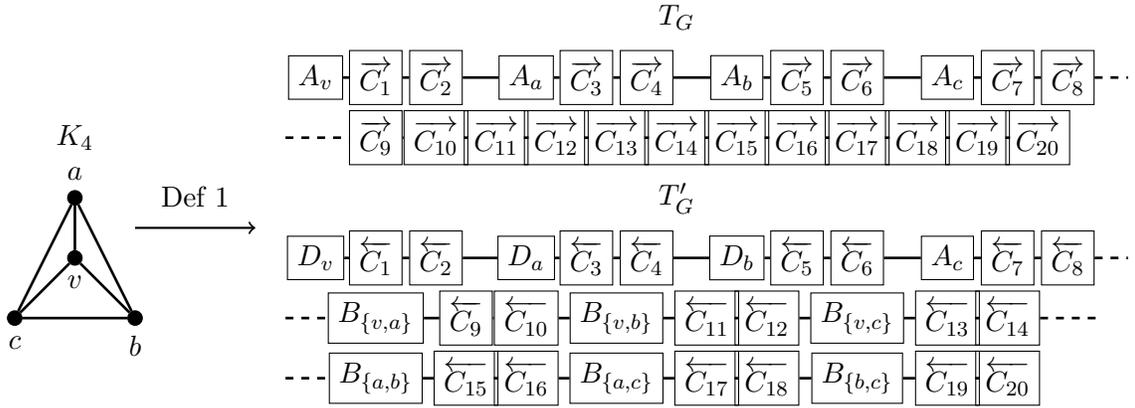

The following lemma will show that an agreement forest
$F$ on $T_G$ and $T_G'$ that preserves all chains $C_i$ has nice properties which we can use to build an independent set $I_F$.

\begin{Lemma} \label{Lemma Av unique 6 pieces}
Let $F$ be an arbitrary agreement forest that preserves all the $C_i$ chains. (i) If a component $B \in F$ intersects with some $A_v$, then $B \subseteq A_v$. Similarly: if $B$ intersects with some $B_e$, then $B \subseteq B_e$, and if $B$ intersects with some $D_v$, then $B \subseteq D_v$; (ii) for each vertex $v \in V$, at least six components of $F$ are required to cover all taxa in $A_v$, and there is only one way to cover $A_v$ with 6 components.
\end{Lemma}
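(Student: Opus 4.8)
The plan is to isolate the following auxiliary fact and deduce both parts from it.

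\textbf{Claim ($\star$).} In any agreement forest $F$ of $T_G$ and $T'_G$ that preserves every $C_i$, each chain $C_i$ is itself a block of $F$.

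The proof of $(\star)$ rests on one structural observation about caterpillars: if $B$ is a block with $|B|\ge 2$, then $T_G[B]$ is the path of backbone vertices running from the attachment point of the leftmost leaf of $B$ to that of the rightmost, together with the pendant edges; hence the backbone vertices used by $T_G[B]$ form an interval, and for distinct blocks of an agreement forest these intervals must be disjoint (a singleton block imposes no such constraint, since $T_G[\{x\}]$ is just the vertex $x$), and likewise in $T'_G$. Now suppose $C_i\subseteq B$ but $B\ne C_i$, so $|B|\ge 4$; write $c^i_1,c^i_2,c^i_3$ for the taxa of $C_i$ in the order they appear along the backbone of $T_G$. Using that the chains $C_1,\dots,C_{5n}$ occur in increasing index order along the backbone of both trees, one first shows $B$ contains no second chain: if $C_j\subseteq B$ for some $j\ne i$ then interval-disjointness against every intermediate preserved chain forces $B$ to contain two consecutive chains $C_t,C_{t+1}$, and restricting $T_G|B$ and $T'_G|B$ to those six taxa gives the caterpillar $(c^t_1,c^t_2,c^t_3,c^{t+1}_1,c^{t+1}_2,c^{t+1}_3)$ in $T_G$ but $(c^t_3,c^t_2,c^t_1,c^{t+1}_3,c^{t+1}_2,c^{t+1}_1)$ in $T'_G$, which have different cherries — contradicting $T_G|B=T'_G|B$. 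So $B\setminus C_i$ consists only of $A$-gadget taxa, and interval-disjointness against the neighbouring preserved chains confines $B\setminus C_i$ to the single $A$-gadget adjacent to $C_i$ in $T_G$, say $A_v$ (so $C_i$ has index at most $2n$). If $B\setminus C_i$ contained an edge-taxon $e^{v\leftarrow}$ or $e^{v\rightarrow}$, then in $T'_G$ that taxon lies in a $B_e$-block in the ``second part'' of $T'_G$ whereas $C_i$ lies in the ``first part'', so the $T'_G$-interval of $B$ would cross a preserved chain separating the two parts — a contradiction. Hence $B\setminus C_i\subseteq\{v_1,v_2,v_3\}$. But $v_1,v_2,v_3$ lie on the same side of $C_i$ in $T_G$ (inside $A_v$) as in $T'_G$ (inside $D_v$), while $C_i$ is reversed between the two trees; consequently the cherry of $T_G|B$ contained in $C_i$ is $\{c^i_2,c^i_3\}$ whereas that of $T'_G|B$ is $\{c^i_1,c^i_2\}$ (or the symmetric statement) — contradiction. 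This establishes $(\star)$.

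Part (i) is then immediate. If a block $B$ met $A_v$ and also some $A_{v'}$ with $v'\ne v$, its $T_G$-interval would contain the positions of a preserved chain lying between $A_v$ and $A_{v'}$ in $T_G$, overlapping that chain's block; and if $B$ met $A_v$ and also some $C_j$, then $B$ would be the block containing $C_j$, which by $(\star)$ equals $C_j$ and does not meet $A_v$. The statements for $B_e$ and $D_v$ follow in exactly the same way with the roles of $T_G$ and $T'_G$ interchanged. For part (ii), note first that in $T'_G$ the nine taxa of $A_v$ split into four ``fragments'' — $\{v_1,v_2,v_3\}$ inside $D_v$, and $\{e_i^{v\leftarrow},e_i^{v\rightarrow}\}$ inside $B_{e_i}$ for $i=1,2,3$ — any two of which are separated by a preserved chain, so by (i) and interval-disjointness every block touching $A_v$ lies inside one fragment. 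It remains to bound the number of blocks needed to partition the fragments, using interval-disjointness in $T_G$, where $A_v$ reads $e_1^{v\leftarrow},v_1,e_1^{v\rightarrow},e_2^{v\leftarrow},v_2,e_2^{v\rightarrow},e_3^{v\leftarrow},v_3,e_3^{v\rightarrow}$: any sub-block of $\{v_1,v_2,v_3\}$ of size at least $2$ spans a $T_G$-interval meeting the intervals of at least two of the three pairs, forcing those pairs to be split into singletons; a short case analysis then shows that every valid partition of $A_v$ uses at least six blocks, with equality only for $\{v_1\},\{v_2\},\{v_3\},\{e_1^{v\leftarrow},e_1^{v\rightarrow}\},\{e_2^{v\leftarrow},e_2^{v\rightarrow}\},\{e_3^{v\leftarrow},e_3^{v\rightarrow}\}$.

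The main obstacle is the ``opposite sides'' scenario inside the proof of $(\star)$: a priori a block containing $C_i$ might pick up a taxon lying to one side of $C_i$ in $T_G$ but to the other side of $C_i$ in $T'_G$, and the cherry/mirroring argument by itself does not forbid this. Ruling it out is where the layout has to be used in full — the fact that all edge-taxa migrate from the $A$-gadgets of $T_G$ into the far-away second part of $T'_G$, combined with chain preservation, which turns any long spanning subtree into an overlap with some other preserved $C_j$ — and checking that the resulting case split on the index and parity of $i$ is exhaustive is the fiddly part.
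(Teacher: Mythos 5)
Your proposal is correct and follows essentially the same route as the paper's proof: first establish that every $C_i$ is exactly a component (using the mirrored orientation of the paired chains together with the vertex-disjointness of induced subtrees along the caterpillar backbone), deduce from this and the sandwiching of each gadget between preserved chains that any component meeting $A_v$, $B_e$ or $D_v$ is contained in it, and then run the same case analysis on how a component can intersect the four fragments $A_v\cap D_v$ and $A_v\cap B_{e_i}$ to get the lower bound of six and the uniqueness of the 6-cover. Your write-up is somewhat more explicit about the interval-disjointness mechanics, but the underlying argument is the same.
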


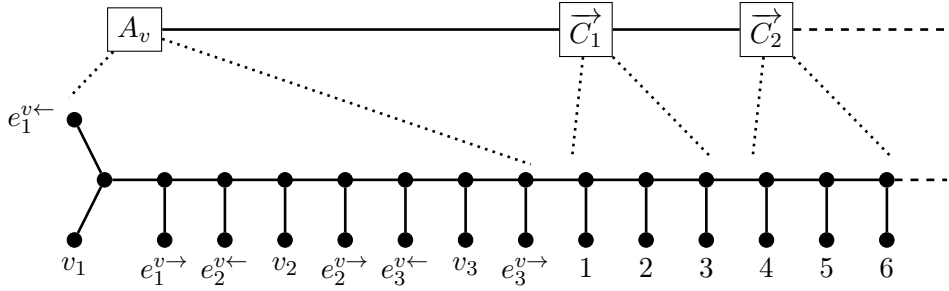
\begin{figure}[!h]
\begin{center}
\begin{tikzpicture}[scale = 0.8]
    
    \node (Au) [draw, minimum width=0.5cm, minimum height=0.5cm] at (9.5, 2.5) {$A_v$};
    \node (C1) [draw, minimum width=0.5cm, minimum height=0.5cm] at (17, 2.5) {$\overrightarrow{C_1}$};
    \node (C2) [draw, minimum width=0.5cm, minimum height=0.5cm] at (20, 2.5) {$\overrightarrow{C_2}$};
    
    \draw [line width = 1pt]
    (Au) edge (C1)
    (C1) edge (C2)
    (C2) edge [dashed] (23, 2.5);
    
    \vertex [label=below:$$] (v1p) at (9,0) {};
    \vertex [label=below:$$] (v2p) at (10,0) {};    
    \vertex [label=below:$$] (w1p) at (11,0) {};    
    \vertex [label=below:$$] (wup) at (12,0) {};    
    \vertex [label=below:$$] (w2p) at (13,0) {};    
    \vertex [label=below:$$] (x1p) at (14,0) {};    
    \vertex [label=below:$$] (xup) at (15,0) {};    
    \vertex [label=below:$$] (x2p) at (16,0) {};
    
    \vertex [label=left:$e_1^{v\leftarrow}$] (v1) at (8.5,1) {};
    \vertex [label=below:$v_1$] (vu) at (8.5,-1) {};
    \vertex [label=below:$e_1^{v\rightarrow}$] (v2) at (10,-1) {};
    
    \vertex [label=below:$e_2^{v\leftarrow}$] (w1) at (11,-1) {};
    \vertex [label=below:$v_2$] (wu) at (12,-1) {};
    \vertex [label=below:$e_2^{v\rightarrow}$] (w2) at (13,-1) {};
    
    \vertex [label=below:$e_3^{v\leftarrow}$] (x1) at (14,-1) {};
    \vertex [label=below:$v_3$] (xu) at (15,-1) {};
    \vertex [label=below:$e_3^{v\rightarrow}$] (x2) at (16,-1) {};
    
    \node (x2pd) at (16.25,0.2) {};
    \node (v1d) at (8.25,1.2) {};

    \draw [line width = 1pt]
    (Au) edge [dotted] (v1d)
    (Au) edge [dotted] (x2pd)
    
    (v1p) edge (v1)
    (v1p) edge (vu)
    (v2p) edge (v2)
    
    (w1p) edge (w1)
    (wup) edge (wu)
    (w2p) edge (w2)
    
    (x1p) edge (x1)
    (xup) edge (xu)
    (x2p) edge (x2);

    \vertex [label=below:$$] (c1p) at (17,0) {};
    \vertex [label=below:$$] (c2p) at (18,0) {};    
    \vertex [label=below:$$] (c3p) at (19,0) {};    
    \vertex [label=below:$$] (c4p) at (20,0) {};    
    \vertex [label=below:$$] (c5p) at (21,0) {};    
    \vertex [label=below:$$] (c6p) at (22,0) {};
    
    \vertex [label=below:$1$] (c1) at (17,-1) {};
    \vertex [label=below:$2$] (c2) at (18,-1) {};
    \vertex [label=below:$3$] (c3) at (19,-1) {};
    
    \vertex [label=below:$4$] (c4) at (20,-1) {};
    \vertex [label=below:$5$] (c5) at (21,-1) {};
    \vertex [label=below:$6$] (c6) at (22,-1) {};
    
    \node (c1pd) at (16.75,0.2) {};
    \node (c3pd) at (19.25,0.2) {};
    
    \node (c4pd) at (19.75,0.2) {};
    \node (c6pd) at (22.25,0.2) {};
    
    \draw [line width = 1pt]
    (C1) edge [dotted] (c1pd)
    (C1) edge [dotted] (c3pd)

    (C2) edge [dotted] (c4pd)
    (C2) edge [dotted] (c6pd)
    
    (v1p) edge (c6p)
    (c1p) edge (c1)
    (c2p) edge (c2)
    (c3p) edge (c3)
    (c4p) edge (c4)
    (c5p) edge (c5)
    (c6p) edge (c6)
    (c6p) edge [dashed] (23,0);
	
\end{tikzpicture}
\caption{Extension of $A_v$, $C_1$ and $C_2$ from Figure \ref{Fig example k4 general}. }
\label{Fig example k4 extention T}
\end{center}
\end{figure}

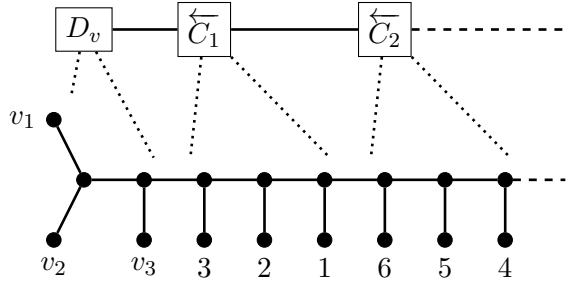
\begin{figure}[!h]
\begin{center}
\begin{tikzpicture}[scale = 0.8]
    
    \node (Au) [draw, minimum width=0.5cm, minimum height=0.5cm] at (9, 5.5) {$D_v$};
    \node (C1) [draw, minimum width=0.5cm, minimum height=0.5cm] at (11, 5.5) {$\overleftarrow{C_1}$};
    \node (C2) [draw, minimum width=0.5cm, minimum height=0.5cm] at (14, 5.5) {$\overleftarrow{C_2}$};
    
    \draw [line width = 1pt]
    (Au) edge (C1)
    (C1) edge (C2)
    (C2) edge [dashed] (17, 5.5);
    
    \vertex [label=below:$$] (v1p) at (9,3) {};
    \vertex [label=below:$$] (v2p) at (10,3) {};
    
    \vertex [label=left:$v_1$] (v1) at (8.5,4) {};
    \vertex [label=below:$v_2$] (vu) at (8.5,2) {};
    \vertex [label=below:$v_3$] (v2) at (10,2) {};

    \node (v1d) at (8.75,4.2) {};
    \node (v2pd) at (10.25,3.2) {};

    \draw [line width = 1pt]
    (Au) edge [dotted] (v1d)
    (Au) edge [dotted] (v2pd)
    
    (v1p) edge (v1)
    (v1p) edge (vu)
    (v2p) edge (v2);

    \vertex [label=below:$$] (c1p) at (11,3) {};
    \vertex [label=below:$$] (c2p) at (12,3) {};    
    \vertex [label=below:$$] (c3p) at (13,3) {};    
    \vertex [label=below:$$] (c4p) at (14,3) {};    
    \vertex [label=below:$$] (c5p) at (15,3) {};    
    \vertex [label=below:$$] (c6p) at (16,3) {};
    
    \vertex [label=below:$3$] (c1) at (11,2) {};
    \vertex [label=below:$2$] (c2) at (12,2) {};
    \vertex [label=below:$1$] (c3) at (13,2) {};
    
    \vertex [label=below:$6$] (c4) at (14,2) {};
    \vertex [label=below:$5$] (c5) at (15,2) {};
    \vertex [label=below:$4$] (c6) at (16,2) {};
    
    \node (c1pd) at (10.75,3.2) {};
    \node (c3pd) at (13.25,3.2) {};
    
    \node (c4pd) at (13.75,3.2) {};
    \node (c6pd) at (16.25,3.2) {};
    
    \draw [line width = 1pt]
    (C1) edge [dotted] (c1pd)
    (C1) edge [dotted] (c3pd)

    (C2) edge [dotted] (c4pd)
    (C2) edge [dotted] (c6pd)
    
    (v1p) edge (c6p)
    (c1p) edge (c1)
    (c2p) edge (c2)
    (c3p) edge (c3)
    (c4p) edge (c4)
    (c5p) edge (c5)
    (c6p) edge (c6)
    (c6p) edge [dashed] (17,3);
	
\end{tikzpicture}
\caption{Extension of $D_v$, $C_1$, $C_2$ from Figure \ref{Fig example k4 general}. }
\label{Fig example k4 extention T'}
\end{center}
\end{figure}

\begin{proof}
Observe that, as all the $C_i$ chains are preserved, every $C_i$ appears as a component of $F$. To see why, note that chain $C_i$ ($i$ odd) cannot be in a component together with the chain $C_{i+1}$. This is because of the opposite orientations of these two chains in $T_G$ and $T'_G$. A component
$C_i$ ($i$ odd) cannot be in a component with taxa that are to its left in $T_G$, because these taxa would have to be on its right in $T'_G$ - but this would violate the preservation of $C_{i+1}$. Symmetrically, chain $C_{i+1}$ cannot be together in a component with taxa that are to its right in $T_G$. Hence, all the $C_i$ are components (without any additional taxa) in $F$. Combining this with the fact each $A_v$ chain is sandwiched between two $C_i$ chains in $T_G$, proves that a component cannot simultaneously intersect with $A_v$ and something outside $A_v$. Exactly the same reasoning holds for the $D_v$ and $B_e$ sets, since these are also sandwiched between pairs of $C_i$ chains. This establishes (i). Towards (ii), let $v$ be an arbitrary vertex in $V$ and let $e_1, e_2, e_3$ be the three edges it is incident to. Observe that $A_v$ consists of the \emph{disjoint} union
of the following four sets: $(A_v \cap D_v), (A_v \cap B_{e_1}), (A_v \cap B_{e_2}), (A_v \cap B_{e_3})$, containing 3,2,2,2 taxa respectively. Due to (i), any component intersecting with one of these four sets, cannot intersect any of the other four sets. Now, if $B$ contains all three taxa from $D_v$, then two components each are required to cover the three $(A_v \cap B_{e_i})$ sets; so 7 components in total. It can be easily verified in a similar way that if $B$ contains 2 taxa from $D_v$, 7 components are required, and that if $B$ contains exactly 1 taxon from $D_v$ then at least 6 components are required - and that in fact the only way to cover $A_v$ with 6 components is as follows:

\[
\{v_1\}, \{v_2\}, \{v_3\},  \{ e_1^{v\leftarrow}, e_1^{v\rightarrow} \}, 
\{ e_2^{v\leftarrow}, e_2^{v\rightarrow} \}, 
\{ e_3^{v\leftarrow}, e_3^{v\rightarrow} \}.
\]
    
\end{proof}

The following observation, which we state without proof, will be useful for both the NP-hardness and APX-hardness reductions.

\begin{Observation}
\label{obs:replace}
Let $F$ be an arbitrary agreement forest that preserves all the $C_i$ chains. If an $A_v$ set is
covered by 7 or more components, then deleting all the components that intersect with $A_v$
and adding the following components yields a new agreement forest $F'$ such that $|F'| \leq |F|$ and
all the $C_i$ chains are still preserved in $F'$:
\[
\{v_1, v_2,v_3\},  \{ e_1^{v\leftarrow} \}, \{ e_1^{v\rightarrow} \}, 
\{ e_2^{v\leftarrow} \}, \{ e_2^{v\rightarrow} \}, 
\{ e_3^{v\leftarrow} \}, \{e_3^{v\rightarrow} \}.
\]

\end{Observation}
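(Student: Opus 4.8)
The plan is to verify three things about $F'$: that every $C_i$ chain is still preserved, that $|F'|\le|F|$, and that $F'$ is an agreement forest of $T_G$ and $T'_G$. The first two are quick. By Lemma~\ref{Lemma Av unique 6 pieces}(i), the components of $F$ that meet $A_v$ are precisely those contained in $A_v$, and they partition $A_v$; the seven components listed in the statement also partition $A_v$, so $F'$ is again a partition of $X$, and since we deleted at least seven components and inserted exactly seven we get $|F'|\le|F|$. Recall also from the proof of Lemma~\ref{Lemma Av unique 6 pieces} that each $C_i$ is itself a component of $F$; since $C_i$ is disjoint from $A_v$ it is untouched by the surgery and survives as a singleton component of $F'$, hence remains preserved.

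The substantive step is showing that $F'$ is an agreement forest, and here I would first record a strengthening of Lemma~\ref{Lemma Av unique 6 pieces}(i): every component of $F$ is either some $C_i$, or a subset of a single $D_w$, or a subset of $\{e^{w\leftarrow},e^{w\rightarrow}\}$ for a single edge $e$ with an endpoint $w\in e$. Indeed, a component containing no $C_i$-taxon lies inside a unique $A_w$ by part~(i); its taxa then lie in the four disjoint sets $D_w$, $B_{e_1}\cap A_w$, $B_{e_2}\cap A_w$, $B_{e_3}\cap A_w$ (with $e_1,e_2,e_3$ the edges at $w$), and applying part~(i) once more to the $D$- and $B_e$-blocks confines the component to exactly one of these four sets. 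Consequently every component of $F$ — in particular every component retained in $F'$ — sits inside a single \emph{content block} of $T_G$ (one of the $A$'s or $C$'s) and inside a single content block of $T'_G$ (one of the $D$'s, $B_e$'s or $C$'s). Because $T_G$ and $T'_G$ are caterpillars in which distinct content blocks span pairwise vertex-disjoint minimal subtrees, this is the structural observation that will drive the rest.

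It then remains to check the two agreement-forest axioms for $F'$. Axiom~(1) is immediate: singletons induce identical one-leaf trees, and $\{v_1,v_2,v_3\}$ induces the unique three-leaf tree in both $T_G$ and $T'_G$. For axiom~(2), pairwise disjointness among the seven new components follows by inspecting the chain $A_v=(e_1^{v\leftarrow},v_1,e_1^{v\rightarrow},\dots,e_3^{v\rightarrow})$ inside $T_G$ and the blocks $D_v,B_{e_1},B_{e_2},B_{e_3}$ inside $T'_G$, using that a new singleton is a leaf that does not belong to $V(T_G[\{v_1,v_2,v_3\}])$ or $V(T'_G[\{v_1,v_2,v_3\}])$. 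For a new component versus an unchanged component $B'$: inside $T_G$ every new subtree lies in $T_G[A_v]$, while by the structural claim $B'$ is confined to a content block other than $A_v$ whose spanning subtree is disjoint from $T_G[A_v]$; inside $T'_G$, the component $\{v_1,v_2,v_3\}$ induces $T'_G[D_v]$, disjoint from whichever content block ($C_j$, $D_w$ with $w\neq v$, or $B_e$) contains $B'$, and each new singleton $\{e_i^{v\leftarrow}\},\{e_i^{v\rightarrow}\}$ is a leaf lying in $A_v$ — hence in no unchanged $B'$, hence outside $V(T'_G[B'])$. The only subcase deserving a second glance is when $B'\subseteq\{e_i^{w\leftarrow},e_i^{w\rightarrow}\}$ with $w$ the endpoint of $e_i$ other than $v$, so that $B'$ shares the block $B_{e_i}$ with our new singletons in $T'_G$; it is already covered by the ``leaf not in $B'$'' remark, but can also be read off the alternation $e_i^{w\leftarrow},e_i^{v\leftarrow},e_i^{w\rightarrow},e_i^{v\rightarrow}$ of $B_{e_i}$. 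There is no deep difficulty here; the main points needing care are the strengthened structural claim and this last $T'_G$-subcase, after which checking the axioms is routine caterpillar bookkeeping.
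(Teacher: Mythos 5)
The paper states this Observation explicitly \emph{without} proof, so there is no authorial argument to compare against; your write-up simply supplies the routine verification the authors chose to omit, and it is correct. Your key step --- strengthening Lemma~\ref{Lemma Av unique 6 pieces}(i) to conclude that every non-$C_i$ component of $F$ lies inside a single $D_w$ or a single $\{e^{w\leftarrow},e^{w\rightarrow}\}$, and hence inside one ``content block'' of each caterpillar --- is exactly the structural fact that makes the disjointness checks go through, and your handling of the one nontrivial subcase (an untouched component sharing the block $B_{e_i}$ with the new singletons in $T'_G$) is sound, since that component's spanning path contains only the \emph{parents} of $e_i^{v\leftarrow},e_i^{v\rightarrow}$, not the leaves themselves. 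The only quibble is cosmetic: each $C_i$ survives as a component equal to $C_i$ (three taxa), not as a ``singleton component.''
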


\noindent
We can now move on to the proof of Theorem \ref{Thm uMAF is NP-Complete}. The high-level idea is that if an $A_v$ set is covered by 6 components, then (by Lemma \ref{Lemma Av unique 6 pieces}) there is a unique way of doing this, which in turn 
enforces that each $A_u$ set of a neighbour $u$ of $v$ requires at least 7 components.
The neighbouring $A_u$ can then be assumed via Observation \ref{obs:replace} to consist of the 7 components named in that
observation. In this way the selection of independent sets with many elements is preferred.
\begin{proof}

    For the input graph $G$ to the MIS problem we construct $T_G$ and $T'_G$ as described in Definition \ref{Def reduction trees}. We let $k$ be the size of a maximum independent set of $G$
    and write $Opt(T_G, T'_G)$ to denote the size of a uMAF of $T_G$ and $T'_G$. We prove that
    $Opt(T_G, T'_G) = 12n-k$.
    
    We start by showing that $Opt(T_G, T'_G) \leq 12n-k$. Let $I \subseteq V$ be a MIS of $G$, so $|I|=k$. For each vertex $v \in I$, which has incident edges $e_1, e_2, e_3$, we introduce the 6 components
    \[
\{v_1\}, \{v_2\}, \{v_3\},  \{ e_1^{v\leftarrow}, e_1^{v\rightarrow} \}, 
\{ e_2^{v\leftarrow}, e_2^{v\rightarrow} \}, 
\{ e_3^{v\leftarrow}, e_3^{v\rightarrow} \},
\]
and for each vertex $v \not \in I$ we introduce the 7 components
\[
\{v_1, v_2,v_3\},  \{ e_1^{v\leftarrow} \}, \{ e_1^{v\rightarrow} \}, 
\{ e_2^{v\leftarrow} \}, \{ e_2^{v\rightarrow} \}, 
\{ e_3^{v\leftarrow} \}, \{e_3^{v\rightarrow} \}.
\]
Finally, we add each $C_i$ as its own component. Summarizing, our agreement forest consists of the following components:
    
    \begin{tabular}{l c r}
         number of components corresponding to $v\in I$ && $6k$ \\
         number of components corresponding to $v\notin I$ && $7(n-k)$\\
         number of components corresponding to the $C_i$ chains && $2(n+m)$
    \end{tabular}\\
\\
Recalling that $m=3n/2$ we obtain a total of $6k + 7(n-k) + 2n + 3n = 12n - k$ components. This
concludes the proof that  $Opt(T_G,T'_G) \leq 12n-k$.
    
We now prove $Opt(T_G,T'_G) \geq 12n-k$. 
By applying Theorem \ref{thm:allchainsintact} (the chain preservation theorem) to the $C_i$ chains, we know that there is a maximum agreement forest $F$ of $T_G$  and $T'_G$ in which all the $C_i$ are preserved. Fix such an $F$. As noted in Lemma \ref{Lemma Av unique 6 pieces} the components of $F$ that intersect with a given $A_v$, are all completely contained inside $A_v$. 
Now, we apply Observation \ref{obs:replace} to all $A_v$ that are covered by 7 or more components; this does not increase the size of the forest. The remaining $A_v$ are covered by exactly 6 components each, and from Lemma \ref{Lemma Av unique 6 pieces} there is a unique way of doing this. Observe that it is not possible, whenever $u$ and $v$ are adjacent vertices, and $e$ is the edge between them, to cover both $A_u$ and $A_v$ with 6 components. 
This would require the agreement forest to contain both the components $\{e^{u\leftarrow}, e^{u\rightarrow}\}$ and $\{e^{v\leftarrow}, e^{v\rightarrow}\}$. However, this would mean that $B_e$ is covered by a single component. But $B_e$ contains taxa that are also in $A_u$. Thus, the component that covers $B_e$ must be fully contained in $A_v$ whilst containing taxa not present in $A_u$. This is not possible.
Hence, the $A_v$ that are covered by exactly 6 components point out an independent set.  Let $p$ (respectively, $q$) be the number of $A_v$ in $F$ that are covered by 6 (respectively, 7) components. Giving us $Opt(T_G, T'_G)=  6p + 7q + 2(m+n)$, so
$6p = Opt(T_G, T'_G) - 2(m+n) - 7q$. Now, given that $q = n-p$ we get $6p = Opt(T_G, T'_G) - 3n - 2n - 7n + 7p$ and thus $p = 12n - Opt(T_G, T'_G)$.  Given that $k \geq p$ we are done.

Finally, we note that the uMAF problem is definitely in NP. Given a partition $F$ of the taxa $X$ of the two input trees, it is straightforward to check in time $O(|X|^4)$ that it induces a valid agreement forest. Specifically, we check that the induced subtrees are mutually disjoint in each input tree, and that the set of quartets (i.e. phylogenetic trees on subsets of 4 taxa) induced by each block of the partition is the same in both input trees: two phylogenetic trees are topologically equal if and only if they each induce the same set of induced quartets \cite{SempleS03}. Combined with the above NP-hardness result, this proves that the problem is NP-complete.
\end{proof}

\section{APX-completeness}\label{Section APX Hard}

In the previous section we have proven that uMAF on caterpillars is NP-complete. Now we will prove that uMAF on caterpillars is APX-complete.  The (general) problem is known to have a polynomial-time 3-approximation \cite{WhiddenBZ13,whidden2009unifying} which immediately places the problem in APX. The APX-hardness of the problem on general trees was stated but not proven in \cite{HeinJWZ96}. Our result thus confirms and strengthens this claim to caterpillars. The reduction is essentially the same as NP-hardness proof, with some slight modifications
due to us working with approximation algorithms rather than exact algorithms.

\newpage

Before starting, it is helpful to establish the following corollary to Theorem 
    \ref{thm:allchainsintact}. It differs from that theorem in the sense that it considers agreement
    forests that are not necessarily optimal, and it is explicitly algorithmic.

\begin{Corollary}[\cite{KelkL19}]
\label{cor:allchainsintact}
Let $T$ and $T'$ be two phylogenetic trees on $X$. 
Let $K$ be an (arbitrary) set of mutually taxa-disjoint chains that are common to $T$ and $T'$.
Let $F$ be an arbitrary agreement forest of $T$ and $T'$. There exists an agreement forest $F'$ such
that $|F'| \leq |F|$ and,
\begin{enumerate}
\item every $n$-chain in $K$ with $n\geq 3$
is preserved in $F'$, and
\item every 2-chain in $K$ 
that is pendant in at least one of $T$ and $T'$
is preserved in $F'$.
\end{enumerate}
Also, $F'$ can be constructed from $F$ in polynomial time.
\end{Corollary}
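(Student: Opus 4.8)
The pure existence part of the corollary is in fact immediate from Theorem~\ref{thm:allchainsintact}: that theorem supplies a maximum agreement forest $F^*$ of $T$ and $T'$ satisfying (1) and (2), and since $F$ is an agreement forest we have $|F^*| = d_\MAF(T,T') \le |F|$. Thus the genuinely new content of the corollary is the last sentence --- that $F'$ can be \emph{constructed from the given $F$} in polynomial time --- and here we cannot simply fall back on $F^*$, because computing any maximum agreement forest is NP-hard. So the plan is to revisit the proof of Theorem~\ref{thm:allchainsintact} in \cite{KelkL19} and observe that it is (or can be phrased as) an explicit, size-non-increasing repair procedure that never appeals to optimality of the forest it starts from.

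Concretely, I would associate to any agreement forest $G$ of $T$ and $T'$ a potential: the number of consecutive pairs $\ell_i,\ell_{i+1}$ lying in some chain of $K$ that relevantly ought to be preserved (all pairs for $n$-chains with $n\ge 3$, and the single pair for a $2$-chain pendant in $T$ or $T'$) but which $G$ places in different components, with ties broken lexicographically by $|G|$. If this potential is $0$ then all the required chains are preserved and we are done with $F'=G$. Otherwise, pick a chain $C=(\ell_1,\dots,\ell_n)\in K$ with a ``breakpoint'' and perform the local repair used in the proof of Theorem~\ref{thm:allchainsintact}: reroute the components meeting the spine of $C$ in $T$ and in $T'$ so that all of $C$ is pulled into a single component, re-attaching any pendant pieces that hung off the spine to neighbouring components. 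The combinatorics of a common chain guarantees there is always enough room to do this without creating vertex overlaps in either tree and without increasing the number of components; and the operation only inspects $T$, $T'$ and $G$ locally around the taxa of $C$, so it is polynomial-time. Since the potential is polynomially bounded and strictly decreases at each step, the procedure terminates after polynomially many steps, producing the desired $F'$ with $|F'|\le|F|$ satisfying (1) and (2).

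The step I expect to be the main obstacle is the bookkeeping that guarantees that repairing one chain does not silently undo the repair of another. Because the chains in $K$ are mutually taxa-disjoint, a repair localized to $C$ never separates the taxa of a different chain $C'$ from one another directly, but it can shuffle components that happen to carry taxa of both $C$ and $C'$ as passengers. The clean way to handle this is precisely the potential-function formulation above: one verifies that the local move strictly decreases (number of within-$K$ breakpoints, then $|G|$) \emph{regardless} of such side effects, so no termination or cascading issue arises. Checking that the local move always achieves this strict decrease --- equivalently, that the exchange argument underlying Theorem~\ref{thm:allchainsintact} is genuinely monotone in this potential --- is the one place requiring care, and it is essentially already carried out, if only implicitly, in \cite{KelkL19}; making it explicit yields the stated polynomial-time construction.
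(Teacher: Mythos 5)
Your proposal is correct and takes essentially the same route as the paper: both reduce the statement to the observation that the proof of Theorem~\ref{thm:allchainsintact} in \cite{KelkL19} is really an explicit, size-non-increasing repair procedure whose appeals to the maximality of $F$ (via proof by contradiction) can be excised, and that iterating the repair terminates in polynomially many steps because fixing one chain does not damage previously preserved chains. Your potential-function bookkeeping is just a slightly more formal packaging of that same termination argument, which the paper instead takes directly from the guarantee in \cite{KelkL19} that already-preserved chains remain preserved.
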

\begin{proof}
Although not stated as such, the proof is implicit in the proof of Theorem \ref{thm:allchainsintact}
given in \cite{KelkL19}. The proof there argues that, if one of the chains in $C \in K$ is not preserved,
then the agreement forest can be explicitly modified such that $C$ is preserved, the total number
of components does not increase, and any chains that were preserved prior to this transformation
are also preserved afterwards. It is easy to check in polynomial time whether a chain in $K$ is
preserved, and the constructive modifications described in the proof can also be easily undertaken in polynomial
time. The only subtlety in the re-use of the proof is that in several places a contradiction is triggered on the assumption that $F$ was a maximum agreement forest. However, a careful reading shows that it is not necessary to use proof by contradiction here at all, and that the assumption that $F$ is maximum
is not required; it was simply an easy way to end the proof. Instead of contradiction, the chain $C$ can be preserved, such that no other preserved chains are damaged, and such that the number of agreements in the agreement forest decreases (which is fine for our purposes).
\end{proof}

\begin{Theorem} \label{Thm uMAF on caterpillars is APX-Complete}
    uMAF on caterpillars is APX-complete.
\end{Theorem}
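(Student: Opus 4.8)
The plan is to prove the two halves separately. Membership in APX is immediate: the polynomial-time $3$-approximation for uMAF on general trees \cite{WhiddenBZ13,whidden2009unifying} applies a fortiori to caterpillars. For APX-hardness I would give an $L$-reduction from MIS on cubic graphs, which is APX-hard \cite{alimonti2000some}, re-using the construction $G \mapsto (T_G, T'_G)$ of Definition \ref{Def reduction trees} almost verbatim; the only substantive addition is that the ``backward'' map must run in polynomial time, which is exactly why the explicitly algorithmic Corollary \ref{cor:allchainsintact}, rather than Theorem \ref{thm:allchainsintact}, is needed.

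Recall from the proof of Theorem \ref{Thm uMAF is NP-Complete} that, writing $n = |V(G)|$ and letting $k$ be the size of a maximum independent set of $G$, we have $Opt(T_G,T'_G) = 12n - k$, and that $G$ cubic implies $k \ge n/4$. Hence $Opt(T_G,T'_G) = 12n - k \le 48k - k = 47k$, which is the first $L$-reduction inequality with $\alpha = 47$: the optimum of the image instance is linearly bounded by the optimum of the source instance.

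For the second inequality I would define a polynomial-time map sending an arbitrary agreement forest $F$ of $T_G, T'_G$ to an independent set $I_F$ of $G$, in three steps. (a) Apply Corollary \ref{cor:allchainsintact} to the mutually taxa-disjoint common $3$-chains $\{C_i\}$ to obtain $F'$ with $|F'| \le |F|$ in which every $C_i$ is preserved. (b) Apply Observation \ref{obs:replace} to every $A_v$ that is covered by $7$ or more components of $F'$, obtaining $F''$ with $|F''| \le |F'|$ and all $C_i$ still preserved; this is polynomial since it only requires a per-$A_v$ inspection, and after it every $A_v$ is covered by exactly $6$ components (necessarily in the unique way of Lemma \ref{Lemma Av unique 6 pieces}) or by exactly $7$. (c) Let $I_F$ be the set of vertices $v$ whose $A_v$ is $6$-covered in $F''$; the structural argument in the proof of Theorem \ref{Thm uMAF is NP-Complete} — which never uses optimality — shows that two adjacent $A_u, A_v$ cannot both be $6$-covered (that would force $B_e$, $e=\{u,v\}$, into a single component, contradicting Lemma \ref{Lemma Av unique 6 pieces}(i)), so $I_F$ is independent and is computable from $F$ in polynomial time. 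Counting the components of $F''$ exactly as there (each $C_i$ is a singleton, and every other component lies inside a single $A_v$ by Lemma \ref{Lemma Av unique 6 pieces}(i)) gives $|F''| = 6|I_F| + 7(n - |I_F|) + 2(n+m) = 12n - |I_F|$; since $F''$ is a valid agreement forest, $|F''| \ge Opt(T_G,T'_G)$, so $|I_F| \le k$, while $|F''| \le |F|$ gives
\[
0 \;\le\; k - |I_F| \;=\; |F''| - Opt(T_G,T'_G) \;\le\; |F| - Opt(T_G,T'_G),
\]
which is the second $L$-reduction inequality with $\beta = 1$. Together with APX membership, and since APX-hardness is preserved under $L$-reductions, this yields APX-completeness of uMAF on caterpillars.

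I expect the main obstacle to be bookkeeping rather than a genuinely new idea: one must verify carefully that every auxiliary statement invoked — Lemma \ref{Lemma Av unique 6 pieces} and Observation \ref{obs:replace} — was stated for \emph{arbitrary} chain-preserving agreement forests rather than only for maximum ones (it was), and that steps (a) and (b) of the backward map are truly polynomial-time, for which Corollary \ref{cor:allchainsintact} is precisely the tool. Once these points are discharged, nothing beyond the NP-hardness construction and the counting identity $|F''| = 12n - |I_F|$ is required.
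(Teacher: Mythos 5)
Your proposal is correct and follows essentially the same route as the paper: APX membership via the known 3-approximation, and an L-reduction re-using the NP-hardness construction with $\alpha=47$, $\beta=1$, where the backward map applies the algorithmic Corollary \ref{cor:allchainsintact} and then Observation \ref{obs:replace} before reading off the 6-covered $A_v$'s. The points you flag as needing care (that Lemma \ref{Lemma Av unique 6 pieces} and Observation \ref{obs:replace} hold for arbitrary chain-preserving forests, and that the backward map is polynomial) are exactly the points the paper discharges.
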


Corollary \ref{cor:allchainsintact} makes it fairly easy to prove APX-hardness. In particular,
given a (not necessarily optimal) agreement forest $F$ for $T_G$ and $T'_G$, Corollary \ref{cor:allchainsintact} shows that we can construct in polynomial time an agreement forest $F'$ that is no larger than $F$ and in which all the $C_i$ chains are preserved\footnote{We note that we do not actually need the full power of Corollary \ref{cor:allchainsintact} here, since we are only considering chains of length 3, and the chains are arranged in a highly specific way in $T_G, T'_G$. Indeed, it would be an option to give a direct, self-contained proof that all the $C_i$ can be assumed to be preserved. However, this would lead to a tedious case analysis, and would in many respects simply be a repetition of the proof in \cite{KelkL19}.}. 

We  give the definition of an L-reduction \cite{papadimitriou1991optimization} and then show that there exists an L-reduction from MIS on cubic graphs to uMAF on caterpillars, proving that uMAF on caterpillars is APX-hard.

\begin{Definition}
    Define two mappings $f$ and $g$ and two positive constants $\alpha$ and $\beta$ such that the following holds:
    \begin{enumerate}
        \item $f$ is a function that in polynomial time maps the input $G$ to MIS to two trees $T_G$ and $T'_G$ that are the input for uMAF;
        \item For any input $G$ we have: 
            \[
                Opt(T_G, T'_G) \leq \alpha Opt(G)
            \]
	where here $Opt(G)$ denotes the size of a maximum independent set of $G$;
        \item $g$ is a function that maps in polynomial time an agreement forest $F$ for $T_G$ and $T'_G$ to an independent set $I_F$ of $G$;
        \item For any agreement forest $F$ for $T_G$ and $T'_G$ we have: 
            \[
               |Opt(G) - |I_F|| \leq \beta |Opt(T_G, T'_G) - |F||. 
            \]
    \end{enumerate}
    Together this forms an L-reduction from MIS to uMAF.

\end{Definition}

Now for the proof of Theorem \ref{Thm uMAF on caterpillars is APX-Complete}.

\begin{proof}
Let $f$ be the mapping described in Definition \ref{Def reduction trees}. Clearly $T_G$ and $T'_G$ can be constructed in polynomial time. This establishes the first property of the L-reduction.\
    
For the second property, $G$ is a cubic graph so $Opt(G) \geq \frac{1}{4} n$. Let $Opt(G) = k$. Then, as in the proof of theorem \ref{Thm uMAF is NP-Complete} we have $Opt(T_G, T'_G) = 12n-k$. We require $12n - k \leq \alpha k$ which is equivalent to $12\frac{n}{k} \leq \alpha + 1$.  Given that $k \geq n/4$, the left hand side of the  inequality is at most 48. Hence, it is sufficient to select $\alpha = 47$.   
    
For the third property, let $g$ be the polynomial-time mapping defined as follows. Let $F$ be an arbitrary agreement forest of $T_G, T'_G$; this is the input to $g$. We apply Corollary \ref{cor:allchainsintact} to it (letting $K$ be the set of the $C_i$ chains); this yields in polynomial time a new agreement forest $F'$ such that $|F'| \leq |F|$ and in which all the $C_i$ chains are preserved. We then apply Observation \ref{obs:replace} to all $A_v$ in $F'$ which have 7 or more components. Let $F''$ denote this transformed agreement forest; we have $|F''| \leq |F'| \leq |F|$. We create an independent set $I_F \subseteq V$ as follows: $v \in I_F$ if and only if $A_v$ is covered by 6 components in $F''$. As argued in the NP-hardness proof, this will create an independent set.

\newpage

For the fourth property we need to find a value for $\beta$ such that $|Opt(G) - |I_F|| \leq \beta |Opt(T_G, T'_G) - |F||$. Let $\ell = |I_F|$ be the number of $A_v$ in $F''$ that are covered by 6 components. We know that $|F| \geq |F'| \geq |F''|= 12n - \ell$. Observe:
        
    \begin{align*}
        |Opt(T_G, T'_G) - |F|| & = |F| - Opt(T_G, T'_G) \\
                            & \geq |F''| - Opt(T_G, T'_G) \\
                            & = 12n-\ell - Opt(T_G, T'_G) \\
                            & = 12n-\ell - (12n-k) \\
                            & = k-\ell \\
                            & = |Opt(G) - |I_F||
    \end{align*}
    
    So we pick $\beta = 1$ and we are done.\
    
\end{proof}

\section{A tight 7k kernel}\label{Section Kernel}

Recall the definitions of common subtrees and common chains from the preliminaries. It is well-known that the following two polynomial-time reduction rules do not alter the size of the uMAF \cite{AllenS01}:\\

\noindent {\bf Subtree reduction.} If $T$ and $T'$ have a maximal common pendant subtree $S$ with at least two leaves, then reduce $T$ and $T'$ to $T_r$ and $T'_r$, respectively, by replacing $S$ with a single leaf with a new label.\\

\noindent {\bf Chain reduction.} If $T$ and $T'$ have a maximal common $n$-chain $C=(\ell_1,\ell_2,\ldots,\ell_n)$ with $n\geq 4$, then reduce $T$ and $T'$ to $T_r=T|X\setminus \{\ell_4,\ell_5,\ldots,\ell_n\}$ and $T_r'=T'|X\setminus \{\ell_4,\ell_5,\ldots,\ell_n\}$, respectively.\\

When applied to exhaustion on two unrooted binary trees, at which point we say the trees are \emph{fully reduced}, these rules yield an instance with (ignoring additive terms) at most $15k$ taxa \cite{KelkL18}, where $k$ is the size of the uMAF\footnote{The kernel bound given in \cite{KelkL18} is in terms of TBR distance, rather than uMAF, but as noted earlier these quantities only differ by 1, so only additive terms are affected.}, and the analysis is tight.

Note that applying the subtree or chain reduction to a caterpillar produces a new caterpillar. In this section we will show that, when applied to exhaustion on two caterpillars, a much smaller kernel is obtained than on general unrooted binary trees.

\begin{Theorem} \label{Thm uMAF has a 7k kernel}
    There is a 7k kernel for uMAF on caterpillars using only the common chain and subtree reductions, and this is tight up to a constant additive term.
\end{Theorem}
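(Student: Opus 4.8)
The plan is to establish two bounds. First, an \emph{upper bound}: a fully reduced pair of caterpillars $T,T'$ on $X$ has $|X| \le 7k$ (up to an additive constant), where $k = d_\MAF(T,T')$. Second, a \emph{tightness} construction: a family of fully reduced caterpillar pairs on which $|X|$ approaches $7k$ from below. I expect the upper bound to be the technical core, and the tightness construction to be a relatively clean explicit family.

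For the upper bound I would fix a maximum agreement forest $F = \{B_1,\dots,B_k\}$ and reason about how the blocks of $F$ sit inside the caterpillar structure of $T$. Since $T$ is a caterpillar, after suppressing degree-2 vertices its "spine" is a path, and each block $B_i$ occupies a contiguous (up to the cherries at the ends) portion of that spine in $T$ — and, simultaneously, a contiguous portion of the spine of $T'$. The key quantitative idea is a charging/counting argument: I want to bound the total number of taxa by bounding, per block of $F$, how many taxa that block can carry. A block that is a common chain of length $\ge 4$ is impossible because the chain reduction has been applied exhaustively; a common pendant subtree with $\ge 2$ leaves is impossible because of the subtree reduction. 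So each block, viewed on its own, contributes few taxa "for free", and the extra taxa must be "paid for" by \emph{boundaries} between blocks — places in $T$ (or $T'$) where one block ends and another begins. Because $F$ has $k$ blocks, there are $O(k)$ such boundaries in each of $T$ and $T'$, i.e. $O(k)$ in total; the fully-reducedness then forces that between consecutive boundaries only a bounded number of taxa can appear without creating a reducible chain or subtree. Making the constant come out to exactly $7$ is the delicate part: I would carefully classify the "segments" of the spine of $T$ delimited by the block boundaries of $F$ in $T$, note that each such segment is a common chain of $T$ and $T'$ restricted to that block (since all its taxa lie in one block of $F$), hence has length $\le 3$, and then combine this with the analogous statement for $T'$ and with a bound on the number of segments. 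The arithmetic should telescope to $7k$ once one is careful about double-counting taxa that are "short" in both trees versus taxa that get their length from one tree only, and about the $O(1)$ corrections coming from the two cherries at each end of each caterpillar and from singleton blocks.

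The main obstacle, and where I would spend most effort, is precisely this segment-classification and the avoidance of over-counting: a taxon belongs to one block of $F$, but its "cost" could in principle be charged against structure in $T$ or in $T'$, and one must set up the charging so that nothing is charged twice and so that the reduction rules genuinely prohibit the configurations that would push the constant above $7$. I would handle this by assigning to each block $B_i$ a weight equal to $|B_i|$, splitting $|B_i|$ according to whether $B_i[X]$ is "chain-like" or "subtree-like" in each of $T$ and $T'$, and showing via the subtree and chain reductions that in every case $|B_i| \le 3 + (\text{number of block-boundaries of } F \text{ incident to } B_i \text{ in } T \text{ and } T')$ or a similar inequality, then summing. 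Summing, the boundary terms are counted twice (once from each incident block) and the number of boundaries is at most $k-1$ in each tree, giving roughly $3k + 2 \cdot 2(k-1)$; some sharpening of the "$3$" and of the boundary count on the end-blocks (the cherries) then brings the bound to $7k + O(1)$.

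For tightness I would exhibit an explicit pair of caterpillars, built by concatenating $\Theta(k)$ identical small gadgets along a common spine, each gadget forcing exactly one component of the MAF and contributing $7$ taxa, arranged so that no common chain of length $\ge 4$ and no common pendant subtree of size $\ge 2$ survives — so the pair is already fully reduced — while $d_\MAF = k + O(1)$. Verifying that the MAF of this family indeed has size $k + O(1)$ (a matching lower bound on the MAF and a matching agreement forest of that size) is the only real work here; the rest is inspection. Combining the $\le 7k + O(1)$ upper bound with this $\ge 7k - O(1)$ family proves the theorem and its tightness claim.
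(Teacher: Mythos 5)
Your high-level ingredients for the upper bound are right (exhaustive chain reduction forces every run of $4$ consecutive taxa of a component to be ``interrupted'', and the interruptions must be paid for), but the concrete counting scheme you sketch does not close, and the step you'd need to fix it is exactly the idea you are missing. Your proposed inequality $|B_i| \le 3 + (\#\text{boundaries incident to } B_i)$ is false: each interruption buys up to $3$ further taxa, not $1$ (the correct statement is that a component with $i$ taxa must be interrupted at least $\lfloor (i-1)/3\rfloor$ times \emph{in each tree}). Your bound of ``at most $k-1$ boundaries per tree'' is also false: a single component can be split into $\Theta(k)$ runs, so the number of boundaries is not controlled by the number of blocks. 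What actually controls the count -- and what the paper's proof hinges on -- is that each interrupting taxon is necessarily a \emph{singleton component} of $F$ (it is ``orphaned''), that within one tree a taxon can be orphaned by at most one component, and hence that each singleton can absorb at most two interruptions in total (one per tree). This yields the constraint $\frac{1}{2}\sum_i x_i\lfloor (i-1)/3\rfloor \le x_1$ alongside $\sum_i x_i\le k$, where $x_i$ counts components of size $i$; maximizing $\sum_i i\,x_i$ and bounding the resulting LP by weak duality with dual solution $(y_1,y_2)=(7,3)$ gives $7k$. Without routing the charge through the singletons you cannot prevent a component from being arbitrarily large relative to $k$.

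Your tightness construction is also structurally impossible as described. There is no local gadget contributing one MAF component with $7$ taxa: a component of size $7$ must orphan two singletons in each tree, so in isolation it drags along extra components and averages far fewer than $7$ taxa per component. The extremal family must instead share the orphaned singletons globally between the two trees: the paper's example has exactly \emph{two} giant components $A$ and $B$, each of size $3(k-1)$, plus $k-2$ singletons $c_j$, where each $c_j$ interrupts $A$ in $T$ and interrupts $B$ in $T'$, saturating both constraints of the LP and giving $6(k-1)+(k-2)=7k-8$ taxa. So both halves of your argument need the orphaned-singleton bookkeeping that your sketch omits.
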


\begin{proof}
 Let $F$ be an uMAF for caterpillars $T$ and $T'$ with $k$ components, where $T$ and $T'$ are fully reduced. We prove that $n \leq 7k$. Suppose that $B$ is a component of $F$ with at least 4 taxa. Observe that in at least one of $T$ and $T'$, say $T$, there is some taxon $x \not \in B$ such that $\{x\} \in F$ (i.e. $x$ is a singleton component in $F$),  $x$ is adjacent to $T[B]$ (i.e. there is an edge $\{x,u\}$ in $T$ such that $u$ is a node of $T[B]$), and every path in $T$ from $x$ to a taxon of $B$ contains at least three edges. If this was not so, then $B$ would be a common chain of length at least 4 and this would contradict the assumption that the chain reduction had been applied to exhaustion. Thus the existence of $B$ forces $x$ to be a singleton component in $F$. We say that $x$ has been \emph{orphaned by} $B$ \emph{in} $T$. 
 Observe:
 \begin{enumerate}
\item  Within a given tree, say $T$, a taxon $x$ can be orphaned by at most one component.
\item A taxon $x$ can be orphaned by in total at most two different components of $F$: one in $T$, and one in $T'$.
\item Within a given tree, say $T$, a component $B \in F$ with at least 4 leaves orphans at least $\lfloor \frac{|B|-1}{3}\rfloor$ taxa.
\end{enumerate}
The third observation is the result of the pigeon-hole principle and the fact, as noted above, that to avoid triggering the chain reduction every sequence of four taxa within a component of $F$ must orphan at least one taxon. So a sequence of 5 taxa needs to orphan at least one taxon, and the same is true for a sequence of 6; a sequence of 7 needs to orphan at least 2 taxa, and so on.

Now, we are ready to bound $n$. Let $x_i$ (where $i \geq 1$) be the number of components in $F$ that contain exactly $i$ taxa. The total number of taxa is thus $\sum_i i \cdot x_i$. To obtain an upper bound, it is sufficient to maximize this sum subject to all $x_i$ being non-negative integers and two constraints:
\[
\sum_i x_i \leq k,
\]
and
\[
\frac{1}{2} \sum_i x_i\bigg \lfloor \frac{|i|-1}{3} \bigg \rfloor \leq x_1.
\]

The second constraint is the result of combining the second and third observations above. (Note that in the summations we could, if desired, take $6k + 3$ as a trivial, finite upper bound on values of $i$ that need to be considered. This is because if $i \geq 6k+4$ then any feasible integral solution to the above constraints must have $x_i = 0$, because of the first constraint.)

As we are only seeking an upper bound on the number of taxa, we can relax the integrality constraints on the $x_i$ variables and to allow them to be fractional. This gives us a linear program (LP). We can use weak duality to place an upper bound on this LP, which is thus an upper bound on the original integral program and thus an upper bound on the size of the kernel (see e.g. \cite{schrijver1998theory} for more background on LP duality). Specifically, we obtain a dual LP\footnote{When building the dual it is helpful to rearrange the second constraint as: $-2x_1 + \sum_{i \geq 2}  x_i \lfloor \frac{|i|-1}{3} \rfloor \leq 0$.} with two dual variables $y_1, y_2$ (corresponding to the two constraints above), an objective function $k \cdot y_1$ and two types of constraints. The first constraint (corresponding to $x_1$ in the original LP) is $y_1 - 2y_2 \geq 1$. For $i \geq 2$, the corresponding dual constraint is $y_1 + \lfloor \frac{|i|-1}{3} \rfloor y_2 \geq i$. It can be verified that taking $y_1 = 7$ and $y_2 = 3$ yields a feasible solution to all dual constraints, achieving an objective function value of $7k$. This completes the proof that the kernel has at most $7k$ taxa.

Now we will prove that this bound is tight up to additive terms by giving for each $k \geq 3$ two fully reduced caterpillars on $7k-8$ leaves, where $k$ is the size of the uMAF. See Figure \ref{fig:tight}. The $\overrightarrow{A}$ chain (respectively, $\overrightarrow{B}$ chain) is the concatenation of all the $\overrightarrow{A_i}$ (respectively, $\overrightarrow{B_i}$) chains. There are $2 \times 3 \times (k-2) + 2*3 = 6k-6$ taxa involved in the $\overrightarrow{A_i}$ and $\overrightarrow{B_i}$ chains, plus $k-2$ $c_i$ taxa, giving $7k-8$ taxa in total. It is easy to verify that $\{ \overrightarrow{A}, \overrightarrow{B}, c_1, \ldots, c_{k-2}\}$ is an agreement forest containing in total $k$ components. It remains only to show that this is an uMAF i.e. that a smaller agreement forest does not exist. Observe, first, that the $\overrightarrow{A_i}$ and $\overrightarrow{B_i}$ form a set of $2(k-1)$ common chains. If we apply Theorem \ref{thm:allchainsintact} to these chains, we see that there exists an uMAF in which these chains are all preserved. Fix such an uMAF. Observe that an $\overrightarrow{A_i}$ chain cannot be in the same component of the uMAF as an $\overrightarrow{B_i}$ chain, because they have conflicting orders in $T$ and $T'$. Next, observe that an $\overrightarrow{A_i}$ or $\overrightarrow{B_i}$ chain cannot be in a component of the uMAF with a $c_i$ taxon: the $c_i$ would be attached to opposite ends of the chain in the two trees. Combined with the fact that all the  $\overrightarrow{A_i}$ and $\overrightarrow{B_i}$
chains are preserved, we conclude that all the $c_i$ taxa form singleton components in the uMAF.  The only remaining freedom in the uMAF is to merge all the $\overrightarrow{A_i}$ into a single component, and all the $\overrightarrow{B_i}$ into a single component. Hence, the uMAF indeed has size $k$.
\end{proof}

\begin{figure}[!h]
\begin{center}
\begin{tikzpicture}[scale = 0.8]    
    
    \node (T1) at (2, 0) {$T$};
    \node (A1) [draw, minimum width=0.5cm, minimum height=0.5cm] at (4, 0) {$\overrightarrow{A_1}$};
    \vertex [label=below:$$] (u1) at (5,0) {};
    \vertex [label=below:$c_1$] (c1) at (5,-1) {};
    \node (A2) [draw, minimum width=0.5cm, minimum height=0.5cm] at (6, 0) {$\overrightarrow{A_2}$};
    \vertex [label=below:$$] (u2) at (7,0) {};
    \vertex [label=below:$c_2$] (c2) at (7,-1) {};

    \node (Ak-1) [draw, minimum width=0.5cm, minimum height=0.5cm] at (10, 0) {$\overrightarrow{A_{k-2}}$};
    \vertex [label=below:$$] (uk) at (11,0) {};
    \vertex [label=below:$c_{k-2}$] (ck) at (11,-1) {};

    \node (Ak) [draw, minimum width=0.5cm, minimum height=0.5cm] at (12, 0) {$\overrightarrow{A_{k-1}}$};
    
    \node (B) [draw, minimum width=0.5cm, minimum height=0.5cm] at (14, 0) {$\overrightarrow{B}$};
    
    \draw [line width = 1pt]
    (A1) edge (u1)
    (u1) edge (c1)
    (u1) edge (A2)
    (A2) edge (u2)
    (u2) edge (c2)
    (u2) edge [dashed] (Ak-1)
    (Ak-1) edge (uk)
    (uk) edge (ck)
    (uk) edge (Ak)
    (Ak) edge (B);

    \node (T1) at (2, -3) {$T'$};
    \node (A1) [draw, minimum width=0.5cm, minimum height=0.5cm] at (4, -3) {$\overrightarrow{B_1}$};
    \vertex [label=below:$$] (u1) at (5,-3) {};
    \vertex [label=below:$c_1$] (c1) at (5,-4) {};
    \node (A2) [draw, minimum width=0.5cm, minimum height=0.5cm] at (6, -3) {$\overrightarrow{B_2}$};
    \vertex [label=below:$$] (u2) at (7,-3) {};
    \vertex [label=below:$c_2$] (c2) at (7,-4) {};

    \node (Ak-1) [draw, minimum width=0.5cm, minimum height=0.5cm] at (10, -3) {$\overrightarrow{B_{k-2}}$};
    \vertex [label=below:$$] (uk) at (11,-3) {};
    \vertex [label=below:$c_{k-2}$] (ck) at (11,-4) {};

    \node (Ak) [draw, minimum width=0.5cm, minimum height=0.5cm] at (12, -3) {$\overrightarrow{B_{k-1}}$};
    
    \node (B) [draw, minimum width=0.5cm, minimum height=0.5cm] at (14, -3) {$\overrightarrow{A}$};
    
    \draw [line width = 1pt]
    (A1) edge (u1)
    (u1) edge (c1)
    (u1) edge (A2)
    (A2) edge (u2)
    (u2) edge (c2)
    (u2) edge [dashed] (Ak-1)
    (Ak-1) edge (uk)
    (uk) edge (ck)
    (uk) edge (Ak)
    (Ak) edge (B);
	
\end{tikzpicture}
\caption{$T$ and $T'$ are fully reduced, have $7k-8$ leaves and an uMAF $F = \{A, B, c_1, ..., c_{k-2}\}$ of size $k$. $A_i$ and $B_i$ are chains with 3 taxa. The construction works for every $k \geq 3$.}
\label{fig:tight}
\end{center}
\end{figure}
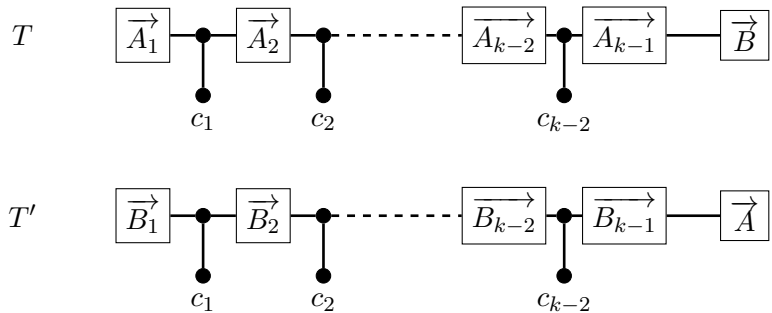

\textbf{Remark}. We observe that the size of the kernel for the same reduction rules can alternatively be bounded to $7k + O(1)$ by leveraging
the \emph{generator} approach of \cite{KelkL18}. The details of the
generator machinery used there are beyond this article but the high-level idea is as follows; we ignore additive terms here. If the uMAF has $k$ components, this can be modelled as adding leaves to a cubic multigraph with $3k$ edges, and distributing $2k$ \emph{breakpoints} (i.e. $k$ per tree) across those edges, such that each edge receives 0, 1 or 2 breakpoints. An edge with 0 breakpoints can receive at most 3 taxa because otherwise the chain reduction could be applied. Now, we observe that as soon as an edge with 1 breakpoint receives strictly more than 2 taxa, a cherry is induced in one of the input trees. In a similar vein, as soon as an edge with 2 breakpoints receives strictly more than 1 taxon, a cherry is induced in one (or both) of the input trees, or contradicts the assumed optimality of the uMAF.

Crucially, when the input consists of 2 caterpillars, there are only 4 cherries to divide across the edges: 2 per input tree. Hence, there are at most a \emph{constant} number of edges of the following types: (i) 1 breakpoint and more than 2 taxa, (ii) 2 breakpoints and more than 1 taxon. The counting equation is then optimized (i.e. describing the worst case: two caterpillars with the highest number of leaves possible) by taking $k$ 0-breakpoint edges with 3 taxa on each, and $2k$ edges each with 2 taxa and 1 breakpoint. This gives $7k + O(1)$ taxa.

\section{An improved FPT branching algorithm for caterpillars}\label{Section Branching Alg}

In this section we prove the following theorem. {As usual, $O^{*}$ notation suppresses polynomial factors.

\begin{Theorem} \label{thm:cat25}
Let $T$ and $T'$ be caterpillars on the same set of taxa $X$. For each $k$, it can be determined
in time $O^{*}(2.49^k)$ whether $T$ and $T'$ have an agreement forest with at most $k$ components.
\end{Theorem}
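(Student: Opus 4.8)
The plan is to follow the cherry-based branching strategy used for general unrooted trees in \cite{WhiddenBZ13, ChenFS15}, and to show that when both inputs are caterpillars the worst-case branching vector improves enough to replace the $O^{*}(3^{k})$ bound by $O^{*}(2.49^{k})$. As usual the algorithm is a bounded search tree: we maintain a forest $\mathcal{F}$ obtained from $T$ by a set of edge deletions, with a budget that starts at (essentially) $k$ and drops with each deletion; the components of $\mathcal{F}$ are the blocks built so far, and at a leaf of the search tree we accept iff $\mathcal{F}$ is already an agreement forest with $T'$. All the structural notions below --- cherries, the subtree and chain reductions, the chain-preservation theorem --- extend routinely from a single caterpillar to the collection of caterpillars that $\mathcal{F}$ becomes during the recursion, so we may present the rules as if $\mathcal{F}$ were still one caterpillar.

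At each node we first apply the subtree and chain reductions to exhaustion, and invoke Theorem~\ref{thm:allchainsintact} (equivalently Corollary~\ref{cor:allchainsintact}) to mark every common chain of length $\geq 3$ and every common pendant $2$-chain as \emph{protected}; the branching rules are designed never to separate the taxa of a protected chain, which is legitimate because some MAF preserves all of them. If the current trees already agree we accept. Otherwise, since a fully reduced pair of caterpillars has no common cherry, we pick a cherry $\{a,b\}$ of $T$ and note that it is not a cherry of $T'$; we then examine the $a$--$b$ path $P$ in $T'$ together with the pendant subtrees $S_{1},\dots,S_{t}$ hanging off its internal vertices, listed from the $a$-end to the $b$-end. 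Here $t\geq 2$ because $\{a,b\}$ is not a cherry of $T'$; and because $T'$ is a caterpillar, $S_{i}$ is a single leaf for every $1<i<t$, while $S_{1}$ (respectively $S_{t}$) is a single leaf exactly when $a$ (respectively $b$) sits at a spine endpoint of $T'$.

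Branching is the familiar three-way split: (i) separate $a$ from its block (delete $e_{a}$); (ii) separate $b$ (delete $e_{b}$); (iii) resolve the remaining case, in which $a$ and $b$ stay together, which --- given the caterpillar geometry and the already-protected chains --- forces a specific, locally readable family of deletions detaching the $S_{i}$ from the $a$--$b$ segment. The point of restricting to caterpillars is that the local picture around $\{a,b\}$ has only a bounded number of shapes (the cherry of $T$ lying at, next to, or far from a terminal cherry of $T'$; with or without a protected chain adjacent; with $a,b$ in the interior of $T'$ or not), and in all but a few of them branch (iii) is substantially cheaper than a unit step. The improvement to $2.49$ then comes from a finite case analysis showing that whenever a node admits only a worst-case split, each of its children immediately admits a reduction, a free contraction, or a strictly cheaper branch; an amortised two-level accounting over these configurations yields a branching number of at most $2.49$. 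Since the search tree has depth $O(k)$ and the per-node work is polynomial, this gives the stated running time.

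I expect the amortised case analysis of the last step to be the main obstacle. Concretely, one must enumerate the possible local configurations of a cherry in a caterpillar-versus-caterpillar instance, determine for each the exact set of deletions forced in branch (iii), check that two consecutive worst-case branches cannot occur (or, equivalently, introduce a weighted measure of the instance that every branch decreases by enough), and then confirm that $2.49$ dominates every branching number that arises. By comparison, correctness of the three-way split --- exhaustiveness of the cases, and that each branch yields a bona fide sub-instance --- and the bookkeeping needed to re-establish the reductions and the protected-chain invariant after each deletion are comparatively routine.
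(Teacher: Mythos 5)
Your outline correctly identifies the framework the paper also uses (a bounded search tree that cuts edges in one caterpillar to build a forest, driven by a cherry $\{a,b\}$ of $T$ that is not a cherry of the other tree, with tidying steps in between), but the proposal stops exactly where the proof begins. The entire content of the theorem is the case analysis you defer to ``the main obstacle'': which branches are generated in each local configuration, why they are exhaustive (which for caterpillars requires nontrivial swapping arguments, e.g.\ that a non-singleton block containing $a$ must exhibit $a$ in a cherry of the restricted tree, and that blocks can be rearranged so that a named edge is never used), and what recurrences result. None of this is carried out, so as it stands there is no proof of the $2.49^k$ bound, nor even of correctness of branch (iii), whose ``specific, locally readable family of deletions'' is precisely the part that needs to be exhibited and justified.

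Moreover, the mechanism you conjecture for the constant $2.49$ is not the one that actually works. You predict an amortised, two-level accounting in which two consecutive worst-case splits cannot occur. The paper instead obtains the bound from a single, non-amortised branching rule: when $a$ and $b$ are separated in $T'_{ab}$ by a chain $C$ of $i\geq 2$ taxa with nonempty sets $L,R$ on either side, one branches five ways (cut $a$; cut $b$; cut $R\cup C$; cut $L\cup C$; cut $L\cup R\cup(C\setminus\{c\})$ for the distinguished taxon $c$), giving $T(k)\leq 2T(k-1)+3T(k-i-1)\leq 2T(k-1)+3T(k-3)$, whose characteristic root is the positive root of $x^{3}-2x^{2}-3=0\approx 2.486$. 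The remaining configurations (adjacent parents; exactly one intermediate taxon, with several sub-cases depending on how the taxa between the cherry and $c$ in $T$ distribute over $L$ and $R$) each need their own correctness proof and recurrence, the hardest involving a family of polynomials $f_{d,e}(x)=x^{d+1}-4x^{d-1}-ex^{d-e}-d$ whose roots must be bounded by $\sqrt{6}$. You should not expect the ``no two consecutive worst cases'' strategy to substitute for this: there is no indication that a worst-case node's children are always cheap, and the paper never needs such a claim. A further small inaccuracy: the paper's branching phase does not re-apply the chain reduction or maintain protected chains; its only interleaved simplifications are deleting singletons, suppressing degree-2 vertices, and reducing common cherries.
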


We start with some simple observations. Recall that we vacuously allow trees on 3 fewer
taxa to be regarded as caterpillars.

\begin{Observation} \label{Cor all trees are caterpillars}
Let $T$ be a caterpillar on $X$ and let $X' \subseteq X$. Then $T|X'$ is also a caterpillar.
\end{Observation}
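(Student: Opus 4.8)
The plan is to run everything through the \emph{spine} characterization stated in the Preliminaries: a tree is a caterpillar exactly when deleting all of its taxa leaves a path (with the convention that everything on $\le 3$ taxa counts as a caterpillar). So I would first dispose of the degenerate case $|X'|\le 3$, where $T|X'$ is a caterpillar by definition, and assume $|X'|\ge 4$, hence $|X|\ge 4$ and $T$ genuinely has a spine $P=p_1p_2\cdots p_\ell$ with $\ell\ge 2$; every taxon of $T$ is joined by a single edge to exactly one vertex of $P$.

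The core step is to analyse $T[X']$, the minimal subtree of $T$ spanning $X'$. By minimality, its leaves are precisely the taxa in $X'$, so every spine vertex that survives in $T[X']$ has degree at least $2$ there; equivalently, the set $Q$ of spine vertices of $T$ lying in $T[X']$ is exactly the set of internal vertices of $T[X']$. Since for $p_i,p_j\in Q$ the unique $p_i$--$p_j$ path in $T$ is the spine segment $p_i\cdots p_j$, and $T[X']$ is connected, $Q$ is a contiguous segment $p_a p_{a+1}\cdots p_b$ of $P$ and contains all the spine edges between its endpoints. Hence deleting the taxa from $T[X']$ leaves precisely the path $p_a\cdots p_b$. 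Passing from $T[X']$ to $T|X'$ only suppresses degree-$2$ vertices, which in $T[X']$ are necessarily spine vertices (taxa have degree $1$, and spine vertices have degree $\le 3$ in $T[X']$ since they have degree $3$ in $T$); each such suppression deletes a $p_i$ and joins its two neighbours, which are themselves consecutive vertices of $P$, changing neither the taxa, nor their attachments, nor the degree of any other vertex. After this single sweep every internal vertex of $T|X'$ has degree exactly $3$ and deleting the taxa still yields a path, so $T|X'$ is a caterpillar.

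The only points needing a moment's care — and essentially the whole of the (very mild) difficulty — are the bookkeeping details: peeling off $|X'|\le 3$ up front so that $Q$ and the spine are genuinely non-trivial, and the observation that suppressing a degree-$2$ spine vertex does not cascade, since it leaves its neighbours' degrees unchanged. An alternative, equally short, route is induction on $|X\setminus X'|$: it suffices to show that deleting a single taxon $x$ from a caterpillar and suppressing the resulting degree-$2$ vertex gives a caterpillar, which is a one-line case split on whether $x$ lies in a cherry (the spine shortens by one at that end and the two remaining taxa there form a new cherry) or hangs off an interior spine vertex (that vertex is simply bypassed), and then to telescope $T|X'$ through single-taxon restrictions. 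I would present the spine argument as the main proof since it avoids even invoking that telescoping identity.
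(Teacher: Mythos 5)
Your proposal is correct. Note, however, that the paper offers no proof at all for this Observation: it is stated as an immediate consequence of the path characterization of caterpillars given in the Preliminaries, so there is nothing to compare against. Your spine argument is a sound and complete way to fill in the details, and the alternative induction on $|X\setminus X'|$ that you sketch at the end would work equally well. One small imprecision worth fixing: when the degree-$2$ vertex being suppressed is an \emph{endpoint} $p_a$ (or $p_b$) of the surviving spine segment with exactly one taxon of $X'$ attached, its two neighbours in $T[X']$ are not both spine vertices --- one of them is that taxon, and the suppression reattaches the taxon to $p_{a+1}$, so your claim that suppressions change ``neither the taxa, nor their attachments'' fails in this boundary case. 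The conclusion is unaffected (the spine segment merely shortens by one and the quotient is still a path), but the case should be acknowledged rather than subsumed under the interior-vertex description.
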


\begin{Lemma} \label{Lemma if a then l or r is 1}
Let $T$ and $T'$ both be caterpillars on $X$. Suppose $a \in X$ is part of a cherry in $T$. For an agreement forest $A$ which does not contain $\{a\}$ as a singleton component, let $B_a$ be the component of $A$ that contains $a$. Let  $B_a = \{a\} \cup L \cup R$ where  where $L$ and $R$ are the taxa in the two subtrees sibling to $a$ in $T|B_a = T'|B_a$ (see Fig. \ref{Fig intro L and R}). Then $|L| \leq 1$ or $|R| \leq 1$. In particular: $a$ is part of a cherry in $T|B_a = T'|B_a$.
\end{Lemma}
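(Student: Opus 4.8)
The plan is to argue directly from the caterpillar structure. First I would fix the component $B_a$ and consider the phylogenetic tree $S := T|B_a = T'|B_a$. By Observation \ref{Cor all trees are caterpillars}, $S$ is itself a caterpillar (or has at most $3$ taxa). Since $a \in X$ lies in a cherry of $T$, I want to locate where $a$ sits inside $S$. Write $B_a = \{a\} \cup L \cup R$ as in the statement, where $L$ and $R$ are the taxa appearing in the two pendant subtrees sibling to $a$ in $S$; note $S[\{a\}\cup L]$ and $S[\{a\}\cup R]$ partition $S$ once we remove the edge to $a$, so $\{a\}\cup L\cup R$ is indeed all of $B_a$.

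The key step is to use the fact that $T[B_a]$ must be an embedded (homeomorphic) copy of $S$ inside $T$, and that $a$ is in a cherry of the whole tree $T$. Let $a, b$ be the cherry in $T$ containing $a$, with common parent $p$. There are two cases. If $b \in B_a$, then $b$ is a leaf adjacent to $a$ in $T[B_a]$, hence $a$ is in a cherry of $S$, so one of $L, R$ is a single taxon (namely $\{b\}$) — actually one of them is exactly $\{b\}$ and we are done with $|L|\le 1$ or $|R|\le 1$. If $b \notin B_a$, then since condition (2) of the agreement forest definition forces $T[B_a]$ and $T[B_b]$ (the block containing $b$) to be vertex-disjoint, and $a,b$ share the parent $p$, the subtree $T[B_a]$ cannot use the edge $\{p,b\}$; so in $T[B_a]$ the leaf $a$ hangs off $p$ and the rest of $T[B_a]$ continues on the single remaining edge at $p$. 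That means in $S = T|B_a$, the leaf $a$ is attached to a degree-$2$-suppressed vertex that has only one other subtree branching off at that first branch point — i.e., walking from $a$ into $S$, the first internal vertex of $S$ has $a$ as one neighbour and the two subtrees $S[\{a\}\cup L]\setminus\{a\}$, $S[\{a\}\cup R]\setminus\{a\}$... wait, here I must be careful: I would instead argue that because all taxa of $L$ lie on the $p$-side in $T$ along a single path emanating from $p$ (as $T$ is a caterpillar and the branch toward $b$ is excluded), the minimal subtree structure forces the split between $L$ and $R$ at $a$ to be "unbalanced". Concretely, in the caterpillar $T$, after removing $a$'s pendant edge, everything in $B_a\setminus\{a\}$ lies along the spine in one direction; the closest taxon to $a$ along that spine becomes a sibling leaf of $a$ in $S$, making $a$ part of a cherry in $S$, so again one of $|L|,|R|$ is at most $1$.

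The cleanest way to package both cases: in the caterpillar $T|B_a$, consider the cherry nearest to $a$; either $a$ itself is in a cherry of $T|B_a$ (done), or $a$ is not, in which case I derive a contradiction with $a$ being in a cherry of $T$. If $a$ is not in a cherry of $S$, then in $S$ both subtrees sibling to $a$ contain at least $2$ taxa, so $|L|\ge 2$ and $|R|\ge 2$; pick $x\in L$ and $y\in R$ that are the taxa of $B_a$ closest to $a$ on each side along the spine of $T$. In $T$, the path from $a$ to $x$ and the path from $a$ to $y$ both start at $a$'s neighbour, which is the parent $p$; but $p$ has degree $3$ with one neighbour being $a$ and — since $a$ is in a cherry — another neighbour being a leaf (the cherry partner) not in $B_a$. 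Then $T[B_a]$ contains $p$ (it is on the $a$–$x$ path), and from $p$ only one edge besides the $a$-edge leads into $B_a$, so $x$ and $y$ lie on the same side of $p$, meaning in $S$ the taxon $a$ has only one sibling subtree, not two — contradicting $a$ not being in a cherry of $S$. Hence $a$ is in a cherry of $S = T|B_a = T'|B_a$, which immediately gives $|L|\le 1$ or $|R|\le 1$.

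The main obstacle I anticipate is handling the degree-$2$ suppression carefully: "$a$ is in a cherry of $T$" is a statement about $T$, and I must correctly track what happens to $a$'s neighbourhood when passing to the minimal spanning subtree $T[B_a]$ and then suppressing degree-$2$ vertices to get $T|B_a$. The crux is that the cherry partner of $a$ being outside $B_a$ blocks one of the two "spine directions" out of $a$'s parent, forcing the embedded subtree to be lopsided at $a$ — and this is exactly what "$a$ is in a cherry of $T|B_a$" encodes. I would make sure the argument is symmetric enough that it does not matter whether we phrase it via $T$ or $T'$, since $T|B_a = T'|B_a$ by the agreement forest definition.
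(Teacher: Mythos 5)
Your proposal is correct and takes essentially the same route as the paper: both arguments observe that, since $a$ lies in a cherry of the caterpillar $T$, all of $B_a\setminus\{a\}$ sits in a single spine direction from $a$'s parent, so the taxon of $B_a$ closest to $a$ in $T$ becomes $a$'s cherry partner in $T|B_a$, forcing $|L|\le 1$ or $|R|\le 1$. The paper packages this as a one-line contradiction against $|L|,|R|\ge 2$; your version is just a more detailed unpacking of the same idea (your phrase ``only one sibling subtree'' should really say that the first branching vertex of $T[B_a]$ reached from $a$ has a single leaf $c$ hanging off it, but the surrounding reasoning already establishes exactly that).
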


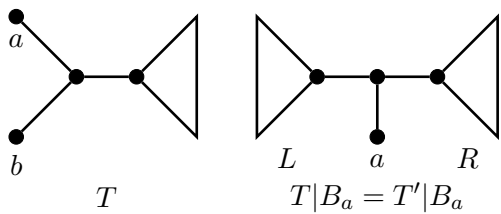
\begin{figure}
\begin{center}
\begin{tikzpicture}[scale = 0.8]
	
    \node (F1) at (1.5,-2) {$T$};
    \vertex [label=below:$a$] (a) at (0,1) {};
    \vertex [label=below:$b$] (b) at (0,-1) {};
    \vertex (xab) at (1,0) {};
    \vertex (x) at (2,0){};
    \draw [line width = 1pt]
    (xab) edge (a)
    (xab) edge (b)
    (xab) edge (x)
    (x) -- (3,1) -- (3,-1) -- (x);

    \node (F1) at (6,-2) {$T|B_a = T'|B_a$};
    \node (BL) at (4.5,-1.35) {$L$};
    \node (BR) at (7.5,-1.35) {$R$};

    \vertex [label=below:$a$] (a) at (6,-1) {};
    \vertex (xa) at (6,0) {};
    \vertex (xl) at (5,0){};
    \vertex (xr) at (7,0){};
    \draw [line width = 1pt]
    (xa) edge (a)
    (xa) edge (xl)
    (xa) edge (xr)
    (xl) -- (4,1) -- (4,-1) -- (xl)
    (xr) -- (8,1) -- (8,-1) -- (xr);

\end{tikzpicture}
\caption{If $a$ is part of a cherry in $T$, then an agreement forest of $T$ and $T'$ in which $a$ is part of a non-singleton component $B_a$ has the property that $|L| \leq 1$ or $|R| \leq 1$.}
\label{Fig intro L and R}
\end{center}
\end{figure}

\begin{proof}
Towards a contradiction,  suppose $|L| \geq 2$ and $|R|\geq 2$. Then $a$ is not part of a cherry in $T|B_a = T'|B_a$. However, let $c$ be the taxon from $L \cup R$ that is closest to $a$ in $T$. Due to the caterpillar structure of $T$, it follows that $\{a,c\}$ is a cherry in $T|B_a$, yielding a contradiction. Hence, $|L| \leq 1$ or $|R| \leq 1$. Combining this with the fact that $L \cup R \neq \emptyset$ (due to the assumption that $B_a \neq \{a\}$), we have that $a$ is part of a cherry in $T|B_a = T'|B_a$.
\end{proof}

\subsection{High-level idea of the branching algorithm}

We draw inspiration from the branching algorithm of Chen et al. \cite{ChenFS15} and earlier
work in a similar vein such as by Whidden et al. \cite{WhiddenBZ13}. We start with two caterpillar trees $T$ and $T'$ on $X$. The high-level idea is to progressively cut edges in one of the caterpillars, say $T'$, to obtain a forest $F'$ (initially $F'=T'$) with an increasing number of components, until it becomes an \emph{agreement} forest for $T$ and $T'$. Each edge cut increases the size of the forest by 1. Hence, if we wish to know whether there exists an agreement forest with at most $k$ components, we can make at most $k-1$ edge cuts in $T'$. As in earlier work, at each step we apply various ``tidying up'' steps:
\begin{enumerate}
\item If a singleton component is created in $F'$ comprising a single taxon $a$, then we delete $a$ from both $T$ and $F'$;
\item Degree-2 nodes are always suppressed;
\item Common cherries are always reduced into a single taxon in both trees.
\end{enumerate}
The way we will choose edges to cut, combined with the tidying up steps, ensures that (unlike $T'$) $T$ remains connected at every step. More formally: if, at a given step, $T'$ has been cut into a forest $F'$ and $X'$ is the union of the taxa in $F'$, then $T$ will have been transformed into $T|X'$. To keep notation light we will henceforth refer simply to $T$ and $F'$, with the understanding that we are actually referring to the tree-forest pair encountered at a specific iteration of the algorithm.

The general decision problem is as follows. We are given a $(T,F')$ tree-forest pair on a set $X' \subseteq X$ of taxa, and a parameter $k$, and we wish to know: is it possible to transform $F'$ into an agreement forest making at most $k-1$ cuts (more formally: an agreement forest for $T|X'$ and $T'|X'$)? If we can answer this question we will, in particular, be able to answer the original question of whether the original input  $(T, T')$ has an agreement forest with at most $k$ components.
We use a recursive branching algorithm to answer this query and use $T(k)$ to denote the running time required to answer the query. 

We will show that $T(k)$ is $O^{*}(2.49^k)$.  We will make heavy use of the following fact. \begin{Observation}
\label{obs:alwayscat}
At every step of the algorithm, $T$ will always be a caterpillar and, due to Observation \ref{Cor all trees are caterpillars}, $F'$ will always be a forest of caterpillars.
\end{Observation}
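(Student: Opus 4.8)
\textbf{Proof plan for Observation \ref{obs:alwayscat}.}

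The statement to prove is the invariant that, at every step of the branching algorithm, $T$ is a caterpillar and $F'$ is a forest of caterpillars. The plan is to argue by induction on the number of steps of the algorithm, checking that each of the operations performed --- the edge cuts in $F'$, the deletion of taxa appearing as singletons, the suppression of degree-2 vertices, and the reduction of common cherries --- transforms a caterpillar (resp. a forest of caterpillars) into a caterpillar (resp. a forest of caterpillars). The base case is immediate: initially $T' = F'$ is a caterpillar by hypothesis, and $T$ is a caterpillar by hypothesis.

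For the inductive step I would treat the operations one at a time. First, recall the key structural fact already recorded as Observation \ref{Cor all trees are caterpillars}: taking $T|X'$ for any subset $X' \subseteq X$ of a caterpillar $T$ again yields a caterpillar. This single fact does most of the work. When an edge of $F'$ is cut, each resulting connected component is, after suppressing the degree-2 node created by the cut, exactly $C|X_C$ for the appropriate caterpillar component $C$ of $F'$ restricted to the taxon set $X_C$ on one side of the cut; hence it is a caterpillar by Observation \ref{Cor all trees are caterpillars}, so $F'$ remains a forest of caterpillars. Deleting a singleton taxon $a$ from $F'$ simply removes an isolated component (trivially leaving a forest of caterpillars), and deleting $a$ from $T$ replaces $T$ by $T|(X'\setminus\{a\})$, which is a caterpillar again by Observation \ref{Cor all trees are caterpillars}. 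Suppressing degree-2 vertices is already part of the definition of the $\cdot|\cdot$ operation and does not change the topology up to homeomorphism, so it preserves the caterpillar property. Finally, reducing a common cherry $\{x,y\}$ to a single new leaf $z$ in both trees: in a caterpillar, contracting a cherry into a single leaf again yields a caterpillar (the underlying path obtained after deleting all leaves is unchanged, or shortens by one vertex if the cherry was one of the two ``end'' cherries, and either way stays a path); the same holds for each caterpillar component of $F'$, and components not containing the cherry are untouched.

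The one point that needs slightly more care, and which I would flag as the main thing to verify rather than a genuine obstacle, is that the invariant ``$T$ equals $T|X'$ where $X'$ is the union of the taxa currently in $F'$'' is itself maintained --- this is what lets us invoke Observation \ref{Cor all trees are caterpillars} for $T$ after each operation. This follows because the only operation that removes taxa from $F'$ is the singleton deletion (and cherry reduction, which removes taxa from both sides symmetrically and adds the same new label to both), and in each such case the algorithm performs the matching deletion on $T$; edge cuts in $F'$ do not change the taxon set, so $T$ is untouched. One should also note that the tidying-up steps on $T$ (degree-2 suppression, cherry reduction) keep $T$ of the form $T_{\mathrm{orig}}|X'$ with the appropriate relabelling. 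Assembling these observations gives the invariant for all iterations, and in particular $F'$ is always a forest of caterpillars, which is all that Observation \ref{obs:alwayscat} asserts.
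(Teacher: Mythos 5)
The paper states Observation \ref{obs:alwayscat} without a formal proof, treating it as an immediate consequence of Observation \ref{Cor all trees are caterpillars} together with the invariant that $T$ is maintained as $T|X'$ where $X'$ is the taxon set remaining in $F'$. Your inductive case analysis over the algorithm's operations (edge cuts, singleton deletion, degree-2 suppression, cherry reduction) is correct and simply spells out in full the reasoning the paper leaves implicit, using the same key fact the paper itself points to.
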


Our first branching rule is well-known in the literature; see also Figure \ref{Fig general starting point branching alg differnte trees}.\\
\\
\textbf{Branching rule 0} (\cite{WhiddenBZ13}). \emph{Suppose that $T$ contains a cherry $\{a,b\}$ and that, in $F'$, $a$ and $b$ are in different trees $T'_a$ and $T'_b$. Then in any agreement forest obtained by cutting edges in $F'$,
at least one of $a$ and $b$ is a singleton component. Hence we can branch by guessing whether to delete $a$, or to delete $b$.}\\
\\
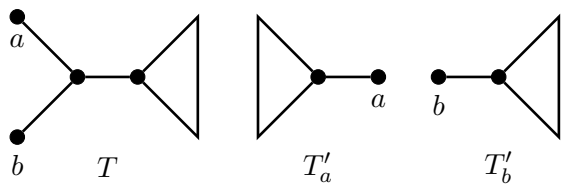
\begin{figure}[!h]
\begin{center}
\begin{tikzpicture}[scale = 0.8]
	
    \node (F1) at (1.5,-1.5) {$T$};
    \vertex [label=below:$a$] (a) at (0,1) {};
    \vertex [label=below:$b$] (b) at (0,-1) {};
    \vertex (xab) at (1,0) {};
    \vertex (x) at (2,0){};
    \draw [line width = 1pt]
    (xab) edge (a)
    (xab) edge (b)
    (xab) edge (x)
    (x) -- (3,1) -- (3,-1) -- (x);

    \node (BL) at (5,-1.5) {$T'_a$};
    \node (BR) at (8,-1.5) {$T'_b$};

    \vertex (l) at (5,0){};

    \vertex [label=below:$a$] (a) at (6,0) {};
    \vertex [label=below:$b$] (b) at (7,0) {};

    \vertex (r) at (8,0){};
    \draw [line width = 1pt]
    (l) edge (a)
    (b) edge (r)
    (l) -- (4,1) -- (4,-1) -- (l)
    (r) -- (9,1) -- (9,-1) -- (r);	
\end{tikzpicture}
\caption{Branching rule 0: taxa $a$ and $b$ are a cherry in $T$ but are in different trees $T'_a$ and $T'_b$ of $F'$. Any agreement forest reached from this point must contain at least one of $a$ and $b$ as a singleton component.}
\label{Fig general starting point branching alg differnte trees}
\end{center}
\end{figure}

This immediately yields the recurrence $T(k) \leq 2T(k-1)$, where the $2T(k-1)$ term corresponds to the fact that we delete $a$ or $b$, and each such guess requires one edge cut. Such branching yields a bound of at most $O^*(2^k)$ and is thus clearly compatible with our overall goal of $O^*(2.49^k)$. So we can safely apply Branching Rule 0 whenever we come across such a situation.

We henceforth assume Branching Rule 0 does not apply. This means that $a$ and $b$ are a cherry in $T$ and are part of the \emph{same} tree $T'_{ab}$ in $F'$, as shown in Figure \ref{Fig general starting point branching alg same trees}. We can also assume that $a$ and $b$ do not have a common parent in $T'_{ab}$, because then $\{a,b\}$ would be a common cherry and would have been reduced by the ``tidying up'' steps. Due to Observation \ref{obs:alwayscat}, $T$ and $T'_{ab}$ will always be caterpillars. This is the starting point for all our new branching rules.

\begin{figure}[h]
\begin{center}
\begin{tikzpicture}[scale = 0.8]
	
    \node (F1) at (1.5,-2) {$T$};
    \vertex [label=below:$a$] (a) at (0,1) {};
    \vertex [label=below:$b$] (b) at (0,-1) {};
    \vertex (xab) at (1,0) {};
    \vertex (x) at (2,0){};
    \draw [line width = 1pt]
    (xab) edge (a)
    (xab) edge (b)
    (xab) edge (x)
    (x) -- (3,1) -- (3,-1) -- (x);

    \node (F1) at (6.5,-2) {$T'_{ab}$};

    \node (BL) at (4.5,-1.5) {$L$};
    \node (BR) at (8.5,-1.5) {$R$};

    \vertex (l) at (5,0){};

    \vertex (pa) at (6,0) {};
    \vertex [label=below:$a$] (a) at (6,-1) {};
    \vertex (pb) at (7,0) {};
    \vertex [label=below:$b$] (b) at (7,-1) {};

    \vertex (r) at (8,0){};
    \draw [line width = 1pt]
    (pa) edge (a)
    (pb) edge (b)
    (l) edge (pa)
    (pa) edge [dashed] (pb)
    (pb) edge (r)
    (l) -- (4,1) -- (4,-1) -- (l)
    (r) -- (9,1) -- (9,-1) -- (r);

\end{tikzpicture}
\caption{The new branching rules all consider the situation that $a$ and $b$ form a cherry in $T$, and are part of the same tree $T'_{ab}$ in $F'$ where they do not form a cherry. Note that the parents of $a$ and $b$ might not be adjacent, and that one or both of $L, R$ might be empty.}
\label{Fig general starting point branching alg same trees}
\end{center}
\end{figure}
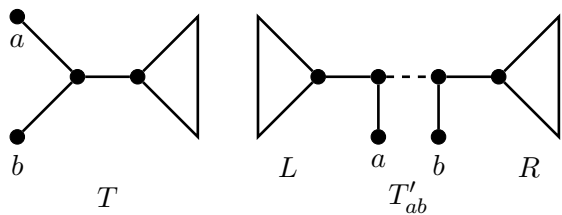

\subsection{Branching Rule 1: The parents of taxa $a$ and $b$ \emph{are} adjacent in $T'_{ab}$ (Bound: $2^k$)} \label{Sec Branching rule 1} 

In this case there are two branches. When we write ``Cut off $L$'' we mean: cut the edge $e_{La}$ between the $L$ subtree and the parent of $a$, as shown in Figure \ref{Fig general starting point branching alg same trees}. ``Cut off $R$'' means: cut the edge $e_{Rb}$ between the $R$ subtree and the parent of $b$. The two branches are:

\begin{enumerate}
    \item Cut off $L$ in $T'_{ab}$ 
    \item Cut off $R$ in $T'_{ab}$ 
\end{enumerate}
This yields at most $T(k) \leq 2T(k-1)$, and thus $T(k) \leq 2^k$. (Note that if $L$ or $R$ is empty, then $\{a,b\}$ form a common cherry and would have been reduced.)

To prove the correctness of this branching rule we will show that, if $F'$ can be transformed into an agreement forest $A$ by cutting at most $k-1$  additional edges, then there exists an agreement forest $A'$, obtained by making at most $k-1$ additional cuts to $F'$ such that the following holds: there exists an edge  $e \in \{ e_{La}, e_{Rb} \}$ such that, for every component $B \in A'$, $T[B]$ does not contain the edge $e$.

\begin{proof}

Let  $A$ be an agreement forest obtained from $F'$ with at most $k-1$ additional cuts.
First, suppose that $a$ is not a singleton in $A$. Then, $a$ is part of a component $B_a \in A$ such that $a$ is in a cherry with some taxon $c$ in $T|B_a = T'|B_a$ (Lemma \ref{Lemma if a then l or r is 1}). If $c=b$ then it
is not possible for $B_a$ to simultaneously contain $a, b$, an element from $L$ and an element of $R$: because $T|B_a$ would contain $\{a,b\}$ as a cherry, but $T'|B_a$ would not.   Hence, at least one of $e_{La}, e_{Rb}$ is not used by $B_a$, and no other component of $A$ uses that edge either (because $B_a$ contains, in addition to $a$, at least one other taxon).

\newpage

So, suppose that $c \neq b$. We distinguish two subcases.

\begin{enumerate}
    \item[(i)]$b \in B_a$. Then $B_a = \{a,b,c\}$. This is because $T|B_a = T'|B_a$ contains $\{a,c\}$ as a cherry (by assumption) but also $\{a,b\}$ as a cherry (due to the topology of $T$) and a taxon can only be in two or more cherries of a phylogenetic tree if the tree has three taxa. Once again, this means that it is safe to cut at least one of $e_{La}, e_{Rb}$.
    \item[(ii)] $b \not \in B_a$.  Note that $b$ is then necessarily a singleton component in $A$ due to $\{a,b\}$ being a cherry in $T$ and $c \in B_a$. Now, we claim that deleting $B_a$ and $\{b\}$ from $A$, and replacing them with $(B_a \setminus \{c\} \cup \{b\})$ and $\{c\}$, yields an agreement forest $A'$ (with the same number of components as $A$). This main reason for this is that in $A$, due to Lemma \ref{Lemma if a then l or r is 1}, $|B_a \cap L| \leq 1$ or $|B_a \cap R| \leq 1$. In particular: if $B_a \cap L = \{c\}$, or $B_a \cap R = \{c\}$, then it is easy to verify that $A'$ is still an agreement forest. Alternatively, suppose that $B_a \cap L$ contains $c$ and at least one other taxon. (A symmetrical analysis holds if $B_a \cap R$ contains $c$ and at least one other taxon.) Irrespective of how close $c$ is to $a$ in $B_a$, we have that $|B_a \cap R|=0$, because otherwise $a$ and $c$ do not form a cherry. It can then again be easily checked that, irrespective of the size of $B_a$ that $A'$ is an agreement forest.
\end{enumerate}
Crucially, there exists $e \in \{e_{La}, e_{Rb}\}$ such that all components in $A'$ avoid $e$.\\

At this stage of the proof we have shown that, if $a$ is not a singleton in $A$, then the branching rule is safe. Due to symmetry between $a$ and $b$, the same proof shows that, if $b$ is not a singleton in $A$, then the branching rule is safe. Hence, the only situation left to consider is when both $a$ and $b$ are singletons in $A$. We produce a new agreement forest $A'$ by removing $\{a\}$ and $\{b\}$, introducing the new component $\{a,b\}$ and (if necessary) splitting at most component in $A$; this is the at most one component of $A$ that uses the edge between the parents of $a$ and $b$ in $T'_{ab}$. Clearly $|A'| \leq |A|$. Moreover, neither of the two edges $e_{La}, e_{Rb}$ are used by $A'$. So the branching rule is safe.
    
\end{proof}

\subsection{Branching Rule 2: There is at least one taxon between $a$ and $b$ in $T'_{ab}$}

Assume $a$ and $b$ have a number of taxa between them in $T'_{ab}$. Denote this chain of taxa between $a$ and $b$ as $C$ with $i \geq 1$ being the number of taxa in $C$, as shown in Figure \ref{Fig ab same tree with chain}. We note that one or both of $L, R$ can potentially be empty. If $L$ is empty then $a$ is in a cherry in $T'_{ab}$ with some taxon $c \in C$. The taxon $c$ can then be moved out of $C$ to take the role of $L$, reducing the number of taxa in $C$ by 1. The same transformation can be applied if $R$ is empty. This allows us to assume that both $L$ and $R$ are non-empty, but at the price of reducing the length of $C$ by at most 2. Observe that if both $L$ and $R$ are empty then, before moving any taxa out of $C$, we have $i \geq 3$ because otherwise we will be in Branching Rule 1 after moving taxa from $C$ into $L$ and $R$.

In this branching rule we deal with the case $i \geq 2$ first.

\begin{figure}[h]
\begin{center}
\begin{tikzpicture}[scale = 0.8]
	
    \node (F1) at (1.5,-2) {$T$};
    \vertex [label=below:$a$] (a) at (0,1) {};
    \vertex [label=below:$b$] (b) at (0,-1) {};
    \vertex (xab) at (1,0) {};
    \vertex (x) at (2,0){};
    \draw [line width = 1pt]
    (xab) edge (a)
    (xab) edge (b)
    (xab) edge (x)
    (x) -- (3,1) -- (3,-1) -- (x);

    \node (F1) at (7.5,-2) {$T'_{ab}$};
    \node (C) at (7.5,-0.75) {$C$};

    \node (BL) at (4.5,-1.5) {$L$};
    \node (BR) at (10.5,-1.5) {$R$};

    \vertex (l) at (5,0){};

    \vertex (pa) at (6,0) {};
    \vertex [label=below:$a$] (a) at (6,-1) {};
    \vertex (pb) at (9,0) {};
    \vertex [label=below:$b$] (b) at (9,-1) {};

    \vertex (r) at (10,0){};
    
    \draw [line width = 1pt]
    (pa) edge (a)
    (pb) edge (b)
    (l) edge (pa)
    (pa) edge [dashed] (pb)
    (pb) edge (r)
    (l) -- (4,1) -- (4,-1) -- (l)
    (r) -- (11,1) -- (11,-1) -- (r);

\end{tikzpicture}
\caption{In the tree $T'_{ab}$ taxa $a$ and $b$ have a chain $C$ of taxa between them. Note that it is also possible that one or both of $L,R$ are empty but by moving taxa out of the chain $C$ we can assume that $L$ and $R$ are both non-empty.}
\label{Fig ab same tree with chain}
\end{center}
\end{figure}
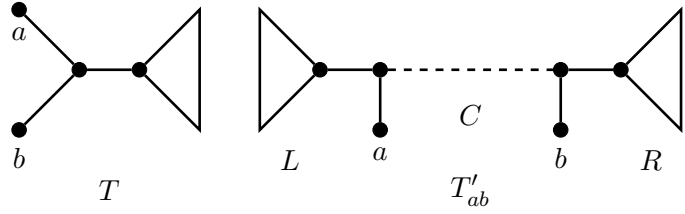

\subsubsection{Branching Rule 2.1: There are at least two taxa between $a$ and $b$ in $T'_{ab}$ (Bound: $2.49^k$, which is runtime-defining)}\label{Sec C branching}

After making sure both $L$ and $R$ are non-empty we assume that $C$ still contains at least 2 taxa. If $C$ only contains 1 taxon we will need to use the later branching rule.

Let $A$ be an agreement forest obtained from $F'$ by at most $k-1$ additional cuts. Either $a$ is a singleton component in $A$; or $b$ is a singleton component in $A$; or $a$ and $b$ are together in some component. If $a$ and $b$ are together in some component $B_{ab} \in A$, then the following implications hold:

\begin{itemize}
    \item If $B_{ab}$ contains at least one taxon from $L$, then it does not contain any taxa from $R$ or $C$.
    \item If $B_{ab}$ contains at least one taxon from $R$, then it does not contain any taxa from $L$ or $C$.
    \item If $B_{ab}$ contains some taxon from $C$, then it contains exactly one such taxon, and no taxa from $L$ or $R$ i.e. $|B_{ab}|=3$.
\end{itemize}
Note that, if $B_{ab}$ contains exactly one taxon $d \in C$, then all the taxa outside $B_{ab}$ whose parents lie on the path from cherry $\{a,b\}$ to $d$ in $T$, will necessarily be singleton components in $A$. Hence, if $d$ is not the taxon in $C$ closest to cherry $\{a,b\}$ in $T$, we can replace the components $B_{ab} = \{a,b,d\}$ and $\{c\}$ with the components $\{a,b,c\}$and $\{d\}$, where $c$ is the taxon in $C$ that is closest to the cherry $\{a,b\}$ in $T$. 

Thus, we have 5 branches:

\begin{enumerate}
    \item Cut off only $b$ 
    \item Cut off only $a$ 
    \item Cut off $R$ and all taxa in $C$ 
    \item Cut off $L$ and all taxa in $C$ 
    \item Cut off $L$ and $R$ and all taxa in $C$ except $c$ (which is closest to $\{a,b\}$ in $T$) 
\end{enumerate}
This gives the recurrence $T(k) = 2T(k-1)+3T(k-i-1)$. Given that $i \geq 2$, this is less than or equal to $2T(k-1) + 3T(k-3)$. It can be verified that the solution to the induced recurrence is the positive root of the polynomial $x^3 - 2x^2 - 3 = 0$, which is
$x \approx 2.48558...$. Hence, $T(k) \leq 2.49^k$. Note that this determines the bottleneck for the entire branching algorithm.

\subsubsection{Branching Rule 2.2: There is exactly one taxon $c$ between $a$ and $b$ in $T'_{ab}$, and $c$ adjacent to cherry $\{a,b\}$ in $T$ (Bound: $1.62^k$)}

See Figure \ref{fig:162}. Here $L$ and $R$ are definitely both non-empty, because otherwise we would be in Branching Rule 1 (i.e. $a$ and $b$ would have adjacent parents in $T'_{ab}$ because $a$ and $c$ or $b$ and $c$ will share the same parent.). The branches in this case are:

\begin{enumerate}
    \item Cut off $L$ and $R$ 
    \item Cut off $c$ 
\end{enumerate}

\begin{figure}[h]
\begin{center}
\begin{tikzpicture}[scale = 0.8]
	    
    \node (F2) at (2,-2) {$T$};
    \vertex [label=below:$a$] (a) at (0,1) {};
    \vertex [label=below:$b$] (b) at (0,-1) {};
    \vertex [label=below:$c$] (c) at (2,-1) {};
    \vertex (xab) at (1,0) {};
    \vertex (xc) at (2,0) {};
    \vertex (t) at (3,0) {};
    \draw [line width = 1pt]
    (xab) edge (a)
    (xab) edge (b)
    (xc) edge (c)
    (xab) edge (t)
    (t) -- (4,1) -- (4,-1) -- (t);
    
    \node (F1) at (8,-2) {$T'_{ab}$};
    \node (BL) at (5.5,-1.35) {$L$};
    \node (BR) at (10.5,-1.35) {$R$};
    \vertex [label=below:$a$] (a) at (7,-1) {};
    \vertex [label=below:$b$] (b) at (9,-1) {};
    \vertex [label=below:$c$] (c) at (8,-1) {};
    \vertex (xa) at (7,0) {};
    \vertex (xc) at (8,0) {};
    \vertex (xb) at (9,0) {};
    \vertex (xl) at (6,0){};
    \vertex (xr) at (10,0){};
    \draw [line width = 1pt]
    (xa) edge (a)
    (xb) edge (b)
    (xc) edge (c)
    (xl) edge (xr)
    (xl) -- (5,1) -- (5,-1) -- (xl)
    (xr) -- (11,1) -- (11,-1) -- (xr);

\end{tikzpicture}
\caption{Here there is only one taxon $c$ between $a$ and $b$ in $T'_{ab}$, and $c$ is adjacent to cherry $\{a,b\}$ in $T$. We can either cut $c$ off, or both $L$ and $R$.}
\label{fig:162}
\end{center}
\end{figure}
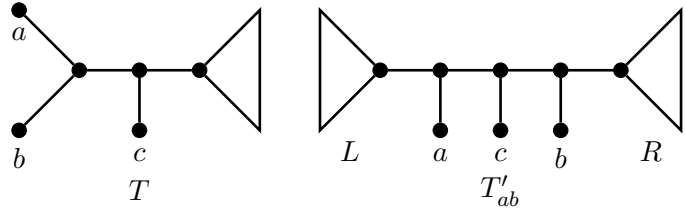

\begin{proof}
    Proof of correctness: Let $A$ be an agreement forest obtained by making at most $k-1$ additional cuts in $F'$. Observe that if there is a component $B \in A$ such that $\{a,b,c\} \subseteq B$, then $B=\{a,b,c\}$, and hence we can safely cut off $L$ and $R$. This situation is modelled by the first branch.

Suppose that there is no such component. Let $B$ now be the component in $A$ that contains $c$. The situation that $c$ is a singleton component is modelled by the second branch. So, suppose that $B$ contains $c$ and at least one other taxon. If neither $a$ nor $b$ is in $B$, then $a$ and $b$ are necessarily singleton components. It can then easily be verified that due to the topology of $T$ and $T'_{ab}$ replacing $B$ and $\{a\}$ with $(B \setminus \{c\}) \cup \{a\}$ and $\{c\}$ yields a new agreement forest $A'$ of size equal to $A$; $c$ is now a singleton component and the second branch can be used. If $B$ contains (without loss of generality, due to symmetry) $a$ but not $b$, then $b$ is a singleton and $B$ does not contain taxa from both $L$ and $R$. Here replacing $B$ and $\{b\}$ with
$(B \setminus \{c\}) \cup \{b\}$ and $\{c\}$ again yields a new agreement forest
with size equal to $A$ where $c$ is a singleton again. Making it fully safe to only use these two branches.
\end{proof}

This gives the recurrence $T(k)= T(k-1) + T(k-2)$. It can be verified that the solution to the induced recurrence is the positive root of the polynomial $x^2 - x - 1 = 0$ with $x \approx 1.6180$. 
Hence $T(k) \leq 1.62^k$.

\subsubsection{Branching Rule 2.3: Exactly one taxon $c$ between $a$ and $b$ in $T'_{ab}$, but $c$ is not adjacent to cherry $\{a,b\}$ in $T$ (Bound: $2.45^k$)}

See Figure \ref{fig:Yintheway}. There are a number of different instances we can come across. We will group them together into two main cases. Note that we can again assume that $L$ and $R$ are both not empty. Remember that if either $L$ or $R$ is empty we end up in an instance for Branching Rule 1. 

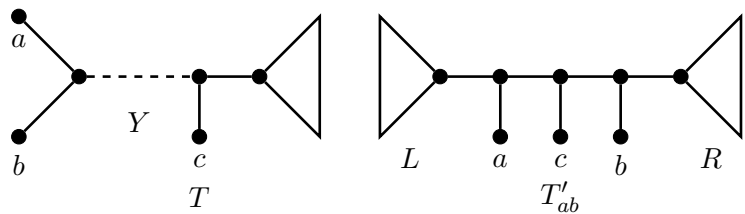
\begin{figure}[h]
\begin{center}
\begin{tikzpicture}[scale = 0.8]
	    
    \node (F2) at (2,-2) {$T$};
    \node (Y) at (1, -0.75) {$Y$};
    \vertex [label=below:$a$] (a) at (-1,1) {};
    \vertex [label=below:$b$] (b) at (-1,-1) {};
    \vertex [label=below:$c$] (c) at (2,-1) {};
    \vertex (xab) at (0,0) {};
    \vertex (xc) at (2,0) {};
    \vertex (t) at (3,0) {};
    \draw [line width = 1pt]
    (xab) edge (a)
    (xab) edge (b)
    (xc) edge (c)
    (xab) edge [dashed] (xc)
    (xc) edge (t)
    (t) -- (4,1) -- (4,-1) -- (t);
    
    \node (F1) at (8,-2) {$T'_{ab}$};
    \node (BL) at (5.5,-1.35) {$L$};
    \node (BR) at (10.5,-1.35) {$R$};
    \vertex [label=below:$a$] (a) at (7,-1) {};
    \vertex [label=below:$b$] (b) at (9,-1) {};
    \vertex [label=below:$c$] (c) at (8,-1) {};
    \vertex (xa) at (7,0) {};
    \vertex (xc) at (8,0) {};
    \vertex (xb) at (9,0) {};
    \vertex (xl) at (6,0){};
    \vertex (xr) at (10,0){};
    \draw [line width = 1pt]
    (xa) edge (a)
    (xb) edge (b)
    (xc) edge (c)
    (xl) edge (xr)
    (xl) -- (5,1) -- (5,-1) -- (xl)
    (xr) -- (11,1) -- (11,-1) -- (xr);

\end{tikzpicture}
\caption{Here there is a non-empty set of taxa
 $Y$ between cherry $\{a,b\}$ and $c$ in $T$, where $c$ is the sole taxon between $a$ and $b$ in $T'_{ab}$.}
\label{fig:Yintheway}
\end{center}
\end{figure}

First assume that $Y$ has no common taxa with $T'_{ab}$, $Y \cap T'_{ab}=\emptyset$. With this assumption it is sufficient to branch off into three branches. For each branch we denote (an upper bound on) the magnitude of the parameter in the corresponding recursive call.

\begin{enumerate}
    \item Cut off all of the taxa in $Y$ $\Rightarrow$ $(k-1)$ (because $Y$ contains at least one taxon)
    \item Cut off $a$ and $b$ $\Rightarrow$ $(k-2)$
    \item Cut off $L$, $R$, and $c$ $\Rightarrow$ $(k-3)$.
\end{enumerate}
This gives the recursion:
\[
    T(k)\leq T(k-1) + T(k-2) + T(k-3)
\]
It can be verified that the solution to the induced recurrence is the positive root of the polynomial $x^3 -x^2 - x - 1 = 0$ with $x \approx 1.8393$.
Hence $T(k) \leq 1.84^k$.

\begin{proof}
    Proof of correctness:
        Let $A$ be an agreement forest obtained by making at most $k-1$ additional cuts in $F'$.\
        
        If $A$ has a component $B$ such that ${a,b} \in B$ but $B\neq \{a,b\}$ all taxa in $Y$ have to be singletons, so branch 1 applies.\
        
        If $B=\{a,b\}$ than $c$ must be a singleton, and none of the taxa in $L$ can be connected to the taxa in $R$, so branch 3 applies.\

        If there is a component $B$ such that $|B| \geq 2$, and exactly one of $a, b \in B$, then we are back in branch 1, since all the taxa in $Y$ must be singletons.\
        
        Finally, if both $a$ and $b$ are singletons, we are in branch 2.

\end{proof}

Now assume $Y$ has taxa that are also present in $T_{ab}'$, $Y \cap T'\neq\emptyset$. Define $D=Y\cap L$ and $E = Y\cap R$. For $d\in D$ define $L_{d\rightarrow}$ as the set of taxa in $L$ whose parents are on the path from $d$ to $a$ in $T'_{ab}$. Define $L_{\leftarrow d}$ as the set of taxa in $L$ left of $d$. Thus $L = L_{\leftarrow d} \cup \{d\} \cup L_{d\rightarrow}$ (imagine reading $T_{ab}'$ from the cherry on the left towards $a$). Define the same for $e\in E$ but mirrored to get $R = \ R_{\leftarrow e} \cup \{e\} \cup R_{e\rightarrow}$ (this time read $T_{ab}'$ from $b$ towards the cherry on the right). See Figure \ref{fig:YinthewaynonemptyDE} for an illustration.

\begin{figure}[h]
\begin{center}
\begin{tikzpicture}[scale = 0.8]
	    
    \node (F2) at (2,-3) {$T$};
    \node (Y) at (1.5, -0.75) {$Y$};
    \vertex [label=below:$a$] (a) at (-1,1) {};
    \vertex [label=below:$b$] (b) at (-1,-1) {};
    \vertex [label=below:$c$] (c) at (3,-1) {};
    \vertex [label=above:$d$] (d) at (1,1) {};
    \vertex [label=above:$e$] (e) at (2,1) {};

    \vertex (xab) at (0,0) {};
    \vertex (xc) at (3,0) {};
    \vertex (xd) at (1,0) {};
    \vertex (xe) at (2,0) {};
    \vertex (t) at (4,0) {};
    \draw [line width = 1pt]
    (xab) edge (a)
    (xab) edge (b)
    (xc) edge (c)
    (xd) edge (d)
    (xe) edge (e)
    (xab) edge [dashed] (xc)
    (xc) edge (t)
    (t) -- (5,1) -- (5,-1) -- (t);
    
    \node (F1) at (11,-3) {$T'_{ab}$};
    \node (BL) at (8,-2) {$L$};
    \node (BL) at (7,-1) {$L_{\leftarrow d}$};
    \node (BL) at (9,-1) {$L_{d\rightarrow}$};
    \node (BR) at (14,-2) {$R$};
    \node (BR) at (13,-1) {$R_{\leftarrow e}$};
    \node (BR) at (15,-1) {$R_{e\rightarrow}$};
    \vertex [label=below:$a$] (a) at (10,-1) {};
    \vertex [label=below:$b$] (b) at (12,-1) {};
    \vertex [label=below:$c$] (c) at (11,-1) {};
    \vertex [label=above:$d$] (d) at (8,1) {};
    \vertex [label=above:$e$] (e) at (14,1) {};
    \vertex (xa) at (10,0) {};
    \vertex (xc) at (11,0) {};
    \vertex (xb) at (12,0) {};
    \vertex (xd) at (8,0) {};
    \vertex (xe) at (14,0) {};
    \vertex (xl) at (7,0){};
    \vertex (xr) at (15,0){};
    \draw [line width = 1pt]
    (xa) edge (a)
    (xb) edge (b)
    (xc) edge (c)
    (xa) edge (xb)
    (xd) edge (d)
    (xe) edge (e)
    (xl) edge [dashed] (xa)
    (xr) edge [dashed] (xb)
    (xl) -- (6,1) -- (6,-1) -- (xl)
    (xr) -- (16,1) -- (16,-1) -- (xr);

\end{tikzpicture}
\caption{Illustrating the definition $L = L_{\leftarrow d} \cup \{d\} \cup L_{d\rightarrow}$ and $R = \ R_{\leftarrow e} \cup \{e\} \cup R_{e\rightarrow}$.}
\label{fig:YinthewaynonemptyDE}
\end{center}
\end{figure}
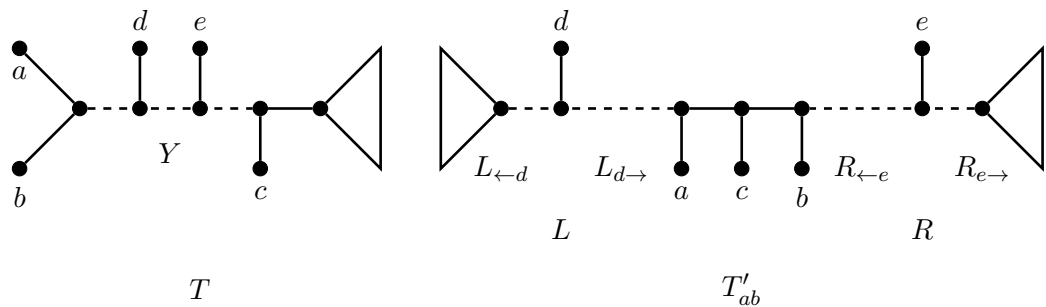

We create the following branches:

\begin{enumerate}
    \item Cut off $a$ and $b$ 
    \item Cut off $L$ and $c$ 
    \item Cut off $c$ and $R$ 
    \item Cut off $L$ and $R$
    \newpage
    \item $\forall d\in D$ 
        \begin{itemize}
            \item Cut off $b$ 
            \item Cut off all taxa in $D\setminus\{d\}$
            \item Cut off $L_{\leftarrow d}$ (with a single cut) if this set is non-empty
            \item Cut off all taxa in $L_{d\rightarrow}$ if this set is non-empty
        \end{itemize}
    \item $\forall e\in E$ 
        \begin{itemize}
            \item Cut off $a$ 
            \item Cut off all taxa in $E\setminus\{e\}$
            \item Cut off all taxa in $R_{\leftarrow e}$ if this set is non-empty
            \item Cut $R_{e\rightarrow}$ (with a single cut) if this set is non-empty 
        \end{itemize}
\end{enumerate}

Before we show the bound on the recursion we will prove that these branches are correct.

\begin{proof}
    Let $A$ be an agreement forest obtained by making at most $k-1$ additional cuts in $F'$. If $A$ has components $\{a\}$ and $\{b\}$ branch 1 applies.\
    
    Say $A$ does not contain both $\{a\}$ and $\{b\}$ but has a component $B$ containing both $a$ and $b$. Then if $B$ contains more taxa they must either all come from $L$, or all come from $R$, or $B = \{a,b,c\}$. In all those cases branches 2-4 applies.\
    
    Say $a$ is in $B\neq \{a\}$ but $b$ is not. Then $\{b\} \in A$. Taxon $a$ must still form a cherry with another taxon $t \neq b$ \rubenAugust{in $B$}. This taxon $t$ can be one of five different taxa:
    \begin{enumerate}
        \item a taxon in $(T-Y) \cap L$
        \item a taxon in $(T-Y) \cap R$
        \item a taxon in $D$
        \item a taxon in $E$
        \item $t$ is taxon $c$
    \end{enumerate}
    
    Case 1 : In this case because $t$ is not in $Y$ we know that if $A$ does not contain $\{c\}$ it must contain $\{a,c,t\}$.
    If that is the case we can change $A$ by replacing $\{a,c,t\}$ and $\{b\}$ with $\{a,b,c\}$ and $\{t\}$. So branch 4 applies.\ 
    
    From this point assume that $\{c\}\in A$. Suppose $B= \{a,t\}$; then we can change $A$ by replacing $B$ and $\{b\}$ with $(B\setminus \{t\}) \cup\{b\}$ and $\{t\}$ so branches 2-4 applies (because we are back in the earlier situation when $a$ and $b$ are together in a component).\
    
    Assume from this point that $B\neq\{a,t\}$. Say $B$ contains more taxa from $L$. Then it cannot have any taxa from $R$. We can then change $A$ by replacing $B$ and $\{b\}$ with $(B\setminus \{t\}) \cup\{b\}$ and $\{t\}$ so branch 3 applies.\

    Say $B$ contains taxa from $R$, meaning $B$ cannot have other taxa from $L$ in it besides $t$. We adjust $A$ by replacing $B$ and $\{b\}$ with $(B\setminus \{t\}) \cup\{b\}$ and $\{t\}$ so branch 2 applies.\\
    
    Case 2: Analogous to case 1.\\

    Case 3: Suppose  $B=\{a,t\}$. 

    If $A$ does not contain $\{c\}$ then we can cut $c$ out of its component (increasing the number of components by one) but then compensate by adding $b$ to $B$, reducing the number of components by one, so branch 3 applies. If $A$ does contain $\{c\}$ then we can add $b$ to $B$ (reducing the number of components by one) so branch 3 applies again.

    Say $B$ contains more taxa from $L$, this can include taxa from $D$, than just $t$. In all cases we adjust $A$ by replacing $B$ and $\{b\}$ with $(B\setminus \{t\}) \cup\{b\}$ and $\{t\}$. So branch 3 applies. If $B=\{a,c,t\}$ than we can do the same adjustment to $A$ but this time branch 4 applies.
    
    Finally, say $B$ contains taxa from $R$ and $c$, meaning $B$ cannot have other taxa from $L$ in it besides $t$. However unlike previous cases branches 1-4 do not apply. This time branch 5 applies. This is because even though $a$ and $t$ form a cherry in $B$, $B$ also contains taxon $c$ and possibly taxa from $E$. At first glance one might think we can adjust $A$ like we did before, but we can't.\\

    Case 4: Analogous to case 1 we can adjust $A$ always in such a way that we end up in branch 2,3 or 4.\\

    Case 5: If $B$ has exactly 2 or 3 elements we have already seen how we can change $A$ by replacing $B$ and $\{b\}$ with $(B\setminus \{t\}) \cup\{b\}$ and $\{t\}$ so branch 2 or 3 applies.\

    Say $B$ contains 4 or more taxa. Because $a$ forms a cherry with $c$ we know $B$ can't have any more taxa from $D$ or $E$. So it can only have taxa from $L\setminus D$ or $R \setminus E$. It cannot have both. If it did $a$ would not form a cherry with $t=c$ in $T_{ab}'$. This means that in both cases we change $A$ again like we did many times before, by replacing $B$ and $\{b\}$ with $(B\setminus \{t\}) \cup\{b\}$ and $\{t\}$. So branch 2 or 3 applies again. And we are done.\\

    To finish the proof we state that we can use the same arguments starting from the assumption that $B$ does contain $b$ but not $a$, and $b$ forms a cherry with a taxon labeled $t$. In all cases we either adjust $A$ ending up in branches 2-4, or we have $t\in E$ and $B$ containing $\{b,t, c\}$ and possibly more from $L$, especially $D$, so branch 6 applies.
\end{proof}

Now we move on to the bound of the recursion.
For branch 5 we can, for each $d \in D$, get a pessimistic upper bound on the magnitude of the recursion as follows: minus $1$ for cutting off $b$, minus $0$ for cutting off $L_{\leftarrow d}$ or all taxa $L_{d\rightarrow}$ this is because they can be empty, and finally minus $(|D|-1)$ due to cutting off $|D|-1$ taxa in in $D\setminus \{d\}$, giving a total magnitude of $k-1-(|D|-1)=k-|D|$. The same count can be done for branch 6.

\newpage

This gives the recursion:
\[
    T(k) \leq  4T(k-2) + |D|T(k-|D|) + |E|T(k-|E|)
\]

This will become too large when the size of $D$ or $E$ is one. So in order to get around this we explicitly consider boundary cases. First, however, we deal with the general case. Here we will assume that if $D$ is nonempty than $L$ contains at least 2 taxa, and that if $E$ is nonempty than $R$ also contains at least 2 taxa. Because of the assumption on $L$, $R$ and $D$, $E$, we can conclude that} at least one of $L_{\leftarrow d}$ \rubenAugust{and}  $L_{d\rightarrow}$ is nonempty for each $d \in D$ and at least one of  $R_{\leftarrow e}$ \rubenAugust{and} $R_{e\rightarrow}$ is nonempty for each $e \in E$. 

With these assumptions we get another pessimistic upper bound of branch 5 on the magnitude of the recursion but this one is safe: minus $1$ for cutting off $b$, minus $1$ for cutting off at least one taxon in $L_{\leftarrow d}$ or $L_{d\rightarrow}$ now that we have the assumption at least one of these two sets is non-empty, and finally minus $(|D|-1)$ due to cutting off $|D|-1$ taxa in in $D\setminus \{d\}$, giving a total magnitude of $k-1-1-(|D|-1)=k-1-|D|$. The same count can be done for branch 6. This gives the recursion:
\[
    T(k) \leq  4T(k-2) + |D|T(k-1-|D|) + |E|T(k-1-|E|)
\]

It can be proven that the solution to the induced recurrence is the positive root of the polynomial $f_{d,e}(x) = x^{d+1} - 4x^{d-1} - ex^{d-e}-d$ where $d=|D|$ and $e=|E|$. Here, without loss of generality we assume $0\leq e\leq d$ and $d\neq0$. We can assume that $e \leq d$ because we can if necessary always swap the positions of $a$ and $b$, due to $a$ and $b$ being a cherry in $T$. We can assume $d \neq 0$ because (prior to any relabelling) we know that at least one of $D, E$ is non-empty because $Y$ has taxa that are also present in $T_{ab}'$. Crucially the largest positive real root of $f_{d,e}(x)$ is no larger than $2.45$. Hence $T(k) \leq 2.45^k$. To prove this we look at two properties $f_{d,e}(x)$ that make proving the desired bound 
straightforward.

\begin{Observation} \label{Obs f is positive}
    For all integers $d,e$ such that $0\leq e \leq d$ and $d\neq0$ we have that if $x > \sqrt{6}$ then $f_{d,e}(x) > 0$.
\end{Observation}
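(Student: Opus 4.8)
The plan is to factor out the dominant power of $x$ and reduce the claim to a completely elementary estimate. Since $x > \sqrt 6 > 0$ we have $x^{d-1} > 0$, so $f_{d,e}(x) > 0$ if and only if
\[
g_{d,e}(x) := \frac{f_{d,e}(x)}{x^{d-1}} = x^2 - 4 - e\,x^{1-e} - d\,x^{1-d} > 0,
\]
where I have used $x^{d-e}/x^{d-1} = x^{1-e}$ and $d/x^{d-1} = d\,x^{1-d}$. Thus it suffices to show that the two ``correction'' terms $e\,x^{1-e}$ and $d\,x^{1-d}$ are each at most $1$ whenever $x > \sqrt 6$: in that case $g_{d,e}(x) \geq x^2 - 4 - 1 - 1 = x^2 - 6 > 0$.

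Next I would prove the auxiliary fact that for every integer $j \geq 1$ and every real $x > 2$ one has $j\,x^{1-j} = j/x^{j-1} \leq 1$. Since $x > 2$ implies $x^{j-1} \geq 2^{j-1}$, it is enough to check $j \leq 2^{j-1}$ for all $j \geq 1$, which follows by a one-line induction (equality at $j=1,2$, and the right-hand side more than doubles thereafter). Applying this with $j = d$ (legitimate because $d \geq 1$) gives $d\,x^{1-d} \leq 1$, and applying it with $j = e$ when $e \geq 1$ gives $e\,x^{1-e} \leq 1$; when $e = 0$ the term $e\,x^{1-e}$ is simply $0 \leq 1$. Since $\sqrt 6 > 2$, the hypothesis $x > \sqrt 6$ guarantees $x > 2$, so both bounds are available.

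Combining the pieces: for $x > \sqrt 6$ we obtain $g_{d,e}(x) \geq x^2 - 6 > 0$, and hence $f_{d,e}(x) = x^{d-1} g_{d,e}(x) > 0$, which is the desired conclusion. I do not expect any genuine obstacle here; the only places that warrant a moment's care are the boundary cases $d = 1$ and $e = 1$, where the inequality $j \leq 2^{j-1}$ is tight (so one should not hope for strict inequality in the individual correction terms), and the case $e = 0$, where that term vanishes outright — but in all of these the final estimate $x^2 - 6 > 0$ still carries through with room to spare, so the argument is uniform in $d$ and $e$.
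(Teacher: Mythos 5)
Your argument is correct, and it takes a genuinely different route from the paper. The paper proves the observation by a three-phase induction: first on $d$ with $e=0$, then for the diagonal case $e=d$, and finally on $e$ for $0<e<d$, each phase rewriting $f_{d+1,\cdot}$ or $f_{d,e+1}$ in terms of the previous polynomial plus a remainder that is shown to be nonnegative. You instead divide by $x^{d-1}$ to get $g_{d,e}(x)=x^2-4-e\,x^{1-e}-d\,x^{1-d}$ and bound each correction term by $1$ via the elementary inequality $j\leq 2^{j-1}$, yielding $g_{d,e}(x)\geq x^2-6>0$ uniformly in $d$ and $e$. The algebra checks out: $x^{d-e}/x^{d-1}=x^{1-e}$ and $d/x^{d-1}=d\,x^{1-d}$, the case $e=0$ is handled separately, and the tightness at $j=1,2$ is harmless since the final margin comes from $x^2>6$. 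Your version is shorter, avoids the case analysis entirely, and makes visible \emph{why} $\sqrt{6}$ is the right threshold (the $-4$ plus two correction terms of size at most $1$ each); the paper's inductive phrasing buys nothing extra here, since both arguments establish exactly the same bound used downstream to conclude $T(k)\leq(\sqrt{6})^k<2.45^k$.
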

\begin{proof}
    Proof by induction on $d$ and $e$. The induction will be done in 3 phases. First we will prove that the statement is true for all integers $d$ with $e=0$. Then we will prove the statement for the special case $d=e$. We end by showing that the statement is true for all integers $e$ between $0$ and strictly less than $d$.\\

    First phase:
    Base cases $e=0$ and $d=1$.
    \[
        f_{1,0}(x) = x^2-5 
    \]
    Clearly if $x > \sqrt{6}$ then $f_{1,0}(x) > 0$. Now we move on towards $d+1$ with $d\geq 1$.
    \begin{align*}
            f_{d+1,0}(x) & = x^{d+2} - 4x^{d} -d-1 \\
            & = x(x^{d+1} - 4x^{d-1} -d) + dx-d-1        
    \end{align*}

    Using the induction hypothesis we can assume $x(x^{d+1} - 4x^{d-1} - d)> 0 $. Leaving only $dx-d-1$. Observe that $x>\sqrt{6}>2$ making $dx-d-1> 2d-d-1=d-1$. We also know that $d\geq 1$ thus $dx-d-1>0$ and we are done.

    Now we move to the second phase: the special case $e=d$:
    \[
        f_{d,d}(x) = x^{d+1} - 4x^{d-1} - 2d
    \]
    First the base $e=d=1$:
    \[
        f_{1,1}(x)=x^2-6
    \]
    Clearly the statement is true for $f_{1,1}(x)$ so let's move on towards $d+1$:
    
    \begin{align*}
            f_{d+1,d+1}(x) & = x^{d+2} - 4x^{d} -2d-2 \\
            & = x(x^{d+1} - 4x^{d-1} -2d) + 2(dx-d-1)        
    \end{align*}
    Using the induction hypothesis we can again assume $x(x^{d+1} - 4x^{d-1} - 2d)> 0 $. Leaving only $2(dx-d-1)$ for which we can repeat the arguments proving $dx-d-1>0$ and we are done.

    We finish with induction on $e$ for values between $0$ and $d$. We have already proven that the statement is true for $f_{d,0}$ for all values of $d$. Now assume it to be true for $f_{d,e}$ with $1 \leq e < d$. We will prove the statement is true for $f_{d,e+1}$. If $e+1=d$ we are done because of the second phase. So we may assume $e+1 < d$.
    \[
        f_{d, e+1} = x^{d+1} - 4x^{d-1} - (e+1)x^{d-e-1}-d
    \]
    Observe that $e+1 \leq 2e < xe$ because $x > 2$ allowing the following steps:

    \begin{align*}
        f_{d, e+1}(x) & = x^{d+1} - 4x^{d-1} - (e+1)x^{d-e-1}-d \\
         & \geq x^{d+1} - 4x^{d-1} - xex^{d-e-1}-d \\
         & = x^{d+1} - 4x^{d-1} - ex^{d-e}-d\\
         & = f_{d,e}(x) > 0
    \end{align*}
    And we are done.
\end{proof}

\begin{Observation} \label{Obs f is negative}
    For all $d,e$ such that $0\leq e \leq d$ and $d\neq0$ we have that if $x\in [0,2]$ then $f_{d,e}(x) < 0$
\end{Observation}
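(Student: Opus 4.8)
The plan is to prove this directly by a sign analysis, factoring out a common power of $x$; no induction is needed. First I would rewrite
\[
f_{d,e}(x) = x^{d+1} - 4x^{d-1} - ex^{d-e} - d = x^{d-1}(x^2-4) - ex^{d-e} - d,
\]
which is legitimate because $d \geq 1$ (as $d \neq 0$), so the exponent $d-1$ is a nonnegative integer, and $d-e \geq 0$ since $e \leq d$.

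Next I would bound each of the three summands on the interval $x \in [0,2]$. Since $x \geq 0$ we have $x^{d-1} \geq 0$, and since $x \leq 2$ we have $x^2 - 4 \leq 0$; hence the first term $x^{d-1}(x^2-4)$ is non-positive. The middle term satisfies $ex^{d-e} \geq 0$ because $e \geq 0$ and $x \geq 0$, so $-ex^{d-e} \leq 0$. Finally $-d \leq -1 < 0$. Adding these three bounds gives $f_{d,e}(x) \leq -d < 0$, which is exactly the claim.

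The only points needing a word of care are the boundary cases in which an exponent vanishes — for instance $x = 0$, or $d = 1$ (so the factor $x^{d-1}$ is $x^{0}$), or $d = e$ (so $x^{d-e} = x^{0}$) — but reading these powers in the usual polynomial sense ($x^{0}\equiv 1$, and in particular $0^{0}=1$) keeps every inequality above intact, since a vanishing exponent merely replaces a factor by $1 \geq 0$. I therefore do not anticipate any real obstacle here: in contrast to Observation \ref{Obs f is positive}, the argument is just the one-line factorisation above followed by term-by-term sign checking. Together with Observation \ref{Obs f is positive} this squeezes every positive real root of $f_{d,e}$ into the interval $(2,\sqrt{6}\,]$, giving the bound $T(k) \leq 2.45^k$. (If one preferred uniformity of style, the same conclusion could also be obtained by the three-phase induction on $d$ and $e$ used for Observation \ref{Obs f is positive}, but the direct route is shorter and I would favour it.)
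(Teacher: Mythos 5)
Your proof is correct and takes essentially the same route as the paper: factor as $x^{d-1}(x^2-4) - ex^{d-e} - d$, observe that each term is non-positive on $[0,2]$, and note that $-d \leq -1$ forces strict negativity. Your extra care about the degenerate exponents ($x^0 \equiv 1$) only tightens what the paper dismisses as "obviously less than zero."
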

\begin{proof}
    This is just a straightforward algebra argument.
\[
        x^{d+1} - 4x^{d-1} - ex^{d-e}-d = x^{d-1}(x^2-4) - ex^{d-e}-d 
\]
    $(x^2-4) \leq 0$ for all values of $x\in [0,2]$. The rest of the formula is obviously less than zero.
\end{proof}

With these two observations we will now prove that our last set of branches are nicely bounded. 

\begin{proof}
    From Observation \ref{Obs f is positive} and \ref{Obs f is negative} we can conclude that for each value of $d,e$ there exists a $x_{0}\in [2, \sqrt{6}]$ such that $x_0$ is a root of $f_{d,e}(x)$ and for all $x>x_0$ we get $f_{d,e}(x)>0$.
\end{proof}

In conclusion the recursion $T(k) \leq 4T(k-2) + |D|T(k-1-|D|) + |E|T(k-1-|E|)$ is bounded by $(\sqrt{6})^k < 2.45^k$.\\

However as mentioned before we still have some boundary cases to consider. In each of these cases we have that at least $L=D=\{d\}$ or $R=E=\{e\}$ holds. This way branch 5 or 6 will only make 1 cut, namely cutting off $b$ and $a$ respectively. This creates a $T(k-1)$ term in the recursion and with already four $T(k-2)$ terms we would exceed the desired $2.49^k$ bound.

Say $L=D=\{d\}$, we make no assumptions on $Y$ or $R$ other than the fact that both are non-empty. Branch 4 cuts off $L$ and $R$ allowing the possibility that an agreement forest found in this branch contains $B=\{a,b,c\}$. However if this is the case, with the same kind of argument we used to reduce the number of branches in section \ref{Sec C branching}, we can adjust $A$ by replacing $B$ and $\{d\}$ with $(B\setminus \{c\})\cup \{d\}$ and $\{c\}$, making branch 4 useless. This only works because $L$ contains a single element.

If $E$ is empty we get the following branches:

\begin{enumerate}
    \item Cut off $a$ and $b$ 
    \item Cut off $L$ and $c$ 
    \item Cut off $c$ and $R$ 
    \item[5.] Cut off $b$
\end{enumerate}

This will give the recursion $T(k)= T(k-1) + 3 T(k-2)$ and we are done, as this solves to less than $2.40^k$.

So now let's assume $E$ is not empty. To get rid of the $T(k-1)$ created by branch 5 in the recursion we will look at the type of agreement forests that can be produced by branch 5. If $d$ is a singleton we are done, because now if $a$ is also a singleton we are in branch 1: recall that branch 5 always cuts off $b$ so any forest reached via branch 5 will have $b$ as a singleton. If $a$ is not a singleton we can always adjust the agreement forest such that $a$ and $b$ are in the same component and branch 2 or 3 applies.

So we can move on to cases where $d$ forms a cherry with some other taxon. $d$ can form a cherry with a taxon in $R$ or $a$ or $c$. If $d$ does not form a cherry with $a$ branch 1 applies because branch 5 always cuts $b$. So we may assume $d$ forms a cherry with $a$. If $B=\{a,d\}$ than we are done, because we can adjust $A$ by replacing $B$, $\{b\}$ and $B_c$, where $B_c$ is the component containing $c$, with $B\cup \{b\}$, $B_c\setminus\{c\}$ and $\{c\}$. So branch 3 applies. If $a,d \in B$ only contains taxa in $R$ other than $a$ and $d$ we can adjust $A$ by replacing $B$ and $\{b\}$ with $B\setminus\{d\} \cup \{b\}$ and $\{d\}$ so branch 2 applies. And finally if $B=\{a,d,c\}$, we can change that to $\{a,b,d\}$ so branch 3 applies.

The only case where we cannot use previously mentioned branches is if $a,d \in  B$ contains $c$ and at least one taxon from $E$,
or when it contains $c$ and at least one taxon from $R\setminus E$. To cover these situations, we modify branching rule 5 to take advantage of this extra structure.

In the first case $B$ can only contain a single taxon from $E$ that must be between $c$ and $d$ in $T$. It cannot contain more than one taxon from $E$. If it did, $B$ would have a different topology in both caterpillars. So we can cut the rest of $R\setminus E$ by cutting $R_{e\rightarrow}$ and all taxa in $R_{e\leftarrow}$. We may also cut all the taxa in $E\setminus\{e\}$. In order for this to work in our favor with respect to lowering the bound on the recursion created by this adjusted set of branches we must assume that $E\setminus\{e\}$ or $R\setminus E$ is non-empty. (The case that they both are, meaning $R=E=\{e\}$, will be the final case we discuss). However we don't know what the best option is for $B$ with respect to all possible taxa in $E$ which are between $d$ and $c$. So for each of them we make a branch.

Now the second case, $a,d\in B$ only contains taxa in both $R\setminus E$ and $\{c\}$. In this case $B$ cannot contain a taxon from $E$ because of the topology of $T$, meaning we can cut all taxa in $E$. Notice that in both cases we end up cutting all taxa in $E$ that are between $a$ and $d$ in $T$. 
So in conclusion if $L=D=\{d\}$ we cut all taxa in $E$ between $d$ and $a$. If there exist none there must be at least one taxon in $E$ between $d$ and $c$ in $T$ because we assumed $E$ to be non-empty. For each of those we create a single branch.

\begin{enumerate}
    \item[5.1] If there is a taxon $e \in E$ between $d$ and $c$ in $T$ than for all such $e$:
        \begin{itemize}        
            \item Cut off $b$ 
            \item Cut off all taxa in $R_{\leftarrow e}$ if this set is non-empty
            \item Cut $R_{e\rightarrow}$ (with a single cut) if this set is non-empty 
            \item Cut the rest of $E\setminus \{e\}$ if this set is nonempty
        \end{itemize}
    \item[] Else:
        \begin{itemize}
            \item Cut off $b$
            \item Cut off all taxa in $E$
        \end{itemize}
\end{enumerate}

Notice that if $E$ is empty we get the exact same set of branches as on the previous page. Branch 5.1 will always cut at least twice: once for cutting off $b$ and once for cutting at least one taxon $e\in E$. If $|E|=1$ we thus still get $T(k-2)$, in both the top and bottom case. In the top case we create $|E|$ number of branches each cutting $b$ and at least one taxon off from $R_{e\rightarrow}$, $R_{e\leftarrow}$ or $E\setminus\{e\}$ for any size of $E$ by assumption.  If $|E|\geq2$ the worst case is when $R=E$ (worst meaning the least number of cuts done in each branch). Then we get $|E|T(k-|E|)$. Notice we get the following: $|E|T(k-|E|) \leq T(k-2) + (|E|-1)T(k-|E|)$. This formula also works for the case when $|E|=1$ because we just saw that we are guaranteed to get at least $T(k-2)$. Now we get the general formula for all the branches:
\[
    T(k) = 3* T(k-2)+ T(k-2) + (|E|-1)T(k-|E|) + |E|T(k-1-|E|).
\]

Now we substitute $d+1 = e$ to get $f_{e,d}(x)$, the same formula as before but now with $d$ and $e$ switched. So we can conclude that this recursion is bounded by $2.49^k$. The same analysis can be done when $R=E=\{e\}$ and $D\setminus\{d\}$ or $L\setminus D$ is non-empty. All that remains is the case $L=D=\{d\}$ and $R=E=\{e\}$.

Say $L=D=\{d\}$ and $R=E=\{e\}$, then there are two cases, $d$ left and $e$ right in $T$ and the other way around shown in  Figure \ref{fig:optionsLR1}. Consider the left case. An agreement forest $A$ requires at least two components to cover the taxa $\{a,b,c,d,e\}$, and each of these components is a subset of $\{a,b,c,d,e\}$ (because $T'_{ab}$ only contains the taxa $a,b,c,d,e$). We can therefore replace these components in the agreement forest with $\{a,d,c,e\}$ and $\{b\}$. In other words, we can deterministically conclude that it is safe to cut $b$ off. Similarly, in the right case we can replace the two or more components covering $\{a,b,c,d,e\}$ with $\{b,e,c,d\}$ and $\{a\}$, allowing us to deterministically conclude that we can cut $a$ off. In both cases it is clear that the recursion is bounded by $2.49^k$ and we are done.

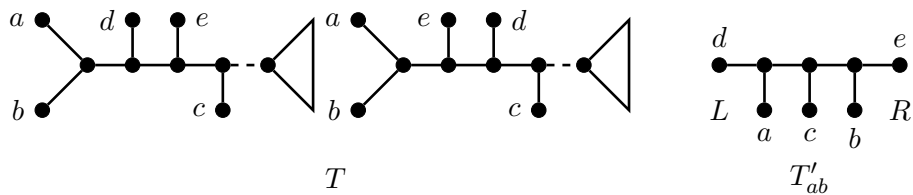
\begin{figure}[!h]
\begin{center}
\begin{tikzpicture}[scale = 0.6]

    \node (F2) at (5.5,-5.5) {$T$};
    \vertex [label=left:$a$] (a) at (-1,-2) {};
    \vertex [label=left:$b$] (b) at (-1,-4) {};
    \vertex [label=left:$c$] (c) at (3,-4) {};
    \vertex [label=left:$d$] (d) at (1,-2) {};
    \vertex [label=right:$e$] (e) at (2,-2) {};
    \vertex (xab) at (0,-3) {};
    \vertex (xc) at (3,-3) {};
    \vertex (xd) at (1,-3) {};
    \vertex (xe) at (2,-3) {};
    \vertex (t) at (4,-3) {};
    \draw [line width = 1pt]
    (xab) edge (a)
    (xab) edge (b)
    (xc) edge (c)
    (xd) edge (d)
    (xe) edge (e)
    (xab) edge  (xc)
    (xc) edge [dashed] (t)
    (t) -- (5,-2) -- (5,-4) -- (t);
    
    \vertex [label=left:$a$] (a) at (6,-2) {};
    \vertex [label=left:$b$] (b) at (6,-4) {};
    \vertex [label=left:$c$] (c) at (10,-4) {};
    \vertex [label=right:$d$] (d) at (9,-2) {};
    \vertex [label=left:$e$] (e) at (8,-2) {};
    \vertex (xab) at (7,-3) {};
    \vertex (xc) at (10,-3) {};
    \vertex (xd) at (9,-3) {};
    \vertex (xe) at (8,-3) {};
    \vertex (t) at (11,-3) {};
    \draw [line width = 1pt]
    (xab) edge (a)
    (xab) edge (b)
    (xc) edge (c)
    (xd) edge (d)
    (xe) edge (e)
    (xab) edge  (xc)
    (xc) edge [dashed] (t)
    (t) -- (12,-2) -- (12,-4) -- (t);
    
    \node (F1) at (16,-5.5) {$T'_{ab}$};
    \node (BL) at (14,-4) {$L$};
    \node (BR) at (18,-4) {$R$};
    \vertex [label=below:$a$] (a) at (15,-4) {};
    \vertex [label=below:$b$] (b) at (17,-4) {};
    \vertex [label=below:$c$] (c) at (16,-4) {};
    \vertex [label=above:$d$] (d) at (14,-3) {};
    \vertex [label=above:$e$] (e) at (18,0-3) {};
    \vertex (xa) at (15,-3) {};
    \vertex (xc) at (16,-3) {};
    \vertex (xb) at (17,-3) {};
    \draw [line width = 1pt]
    (xa) edge (a)
    (xa) edge (d)
    (xb) edge (b)
    (xb) edge (e)
    (xc) edge (c)
    (xa) edge (xb);
    
\end{tikzpicture}
\caption{The two different cases if $|L|=|D|=|R|=|E|=1$.}
\label{fig:optionsLR1}
\end{center}
\end{figure}

\newpage

\section{Discussion and conclusions} \label{sec:discussion}

\subsection{Caterpillars and TBR distance: a complex relationship}
We recall the following definition of a \emph{tree bisection and reconnection} move, defined on unrooted binary phylogenetic trees, and its corresponding distance.
 Let $T$ be a phylogenetic tree on $X$. Apply the following three-step operation to $T$:
\begin{enumerate}
    \item Delete an edge in $T$ and suppress any resulting degree-2 vertex. Let $T_1$ and $T_2$ be the two resulting phylogenetic trees.
    \item If $T_1$ (resp. $T_2$) has at least one edge, subdivide an edge in $T_1$ (resp. $T_2$) with a new vertex $v_1$ (resp. $v_2$) and otherwise set $v_1$ (resp. $v_2$) to be the single isolated vertex of $T_1$ (resp. $T_2$).
    \item Add a new edge $\{v_1,v_2\}$ to obtain a new phylogenetic tree $T'$ on $X$.
\end{enumerate}
We say that $T'$ has been obtained from $T$ by a single  {\it tree bisection and reconnection (TBR) operation} (or, {\it TBR move}). We define the TBR {\it distance}  between two phylogenetic trees $T$ and $T'$ on $X$,  denoted by $d_{TBR}(T,T')$, to be the minimum number of TBR operations that are required to transform $T$ into $T'$. As mentioned earlier it is well known that, on unrooted binary trees the TBR distance between two trees is equal to the size of an uMAF, minus one \cite{AllenS01}.

In the main part of this article we did not give a formal definition of TBR distance, focussing only on agreement forests. The reason for this, is that when focussing on a restricted subset of tree topologies as we do here (caterpillars), the definition of TBR distance becomes more complex and consequently so does the relationship with agreement forests. In particular: when defining the TBR distance between two caterpillars, should (i) all the intermediate trees \emph{also} be caterpillars, or (ii) is it permitted that the intermediate trees be general trees? In (ii) the equivalence between TBR distance and agreement forests remains. However, in (i) the relationship with agreement forests breaks down somewhat.  In particular, it is possible that although two caterpillars have a maximum agreement forest with $k$ components the intermediate trees constructed by the $k-1$ TBR moves aren't all caterpillars.

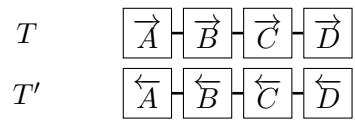
\begin{figure}[!h]
\begin{center}
\begin{tikzpicture}[scale = 0.8]    
    \node (T1) at (2, 0) {$T$};
    \node (A) [draw, minimum width=0.5cm, minimum height=0.5cm] at (4, 0) {$\overrightarrow{A}$};
    \node (B) [draw, minimum width=0.5cm, minimum height=0.5cm] at (5, 0) {$\overrightarrow{B}$};
    \node (C) [draw, minimum width=0.5cm, minimum height=0.5cm] at (6, 0) {$\overrightarrow{C}$};
    \node (D) [draw, minimum width=0.5cm, minimum height=0.5cm] at (7, 0) {$\overrightarrow{D}$};
    
    \draw [line width = 1pt]
    (A) edge (B)
    (B) edge (C)
    (C) edge (D);
    
    \node (T1) at (2, -1) {$T'$};
    \node (A) [draw, minimum width=0.5cm, minimum height=0.5cm] at (4, -1) {$\overleftarrow{A}$};
    \node (B) [draw, minimum width=0.5cm, minimum height=0.5cm] at (5, -1) {$\overleftarrow{B}$};
    \node (C) [draw, minimum width=0.5cm, minimum height=0.5cm] at (6, -1) {$\overleftarrow{C}$};
    \node (D) [draw, minimum width=0.5cm, minimum height=0.5cm] at (7, -1) {$\overleftarrow{D}$};
    
    \draw [line width = 1pt]
    (A) edge (B)
    (B) edge (C)
    (C) edge (D);
	
\end{tikzpicture}
\caption{Caterpillars $T$ and $T'$ are made of blocks of chains, oriented in opposing directions in each tree. Each chain contains 3 taxa.}
\label{Fig example TBR not equal to uMAF-1}
\end{center}
\end{figure}

For example the two caterpillars in Figure \ref{Fig example TBR not equal to uMAF-1} have a maximum agreement forest of size 4, $\{A,B, C, D\},$ and we can thus obtain $T'$ from $T$ after 3 TBR moves. However, the only way to obtain $T'$ from $T$ with 3 moves is that the first intermediate tree is \emph{not} a caterpillar; if we restrict to caterpillars, 4 or more moves are required. It would be interesting to elucidate this relationship further, and whether there is a variant of agreement forests that models this variant of TBR distance on caterpillars.

\subsection{Future research}

A number of interesting questions remain. We have shown that computation of uMAF on caterpillars remains hard; what kind of topological restrictions on input trees make uMAF easy? Can we develop new reduction rules which, for caterpillars, reduce the kernel bound below $7k$? Similarly, what kind of new branching rules would be required to reduce the running time of the caterpillar branching algorithm below $2.49^k$? Can the insights from our $2.49^k$ branching algorithm be leveraged to improve the current state-of-the-art $3^k$ branching algorithm for general trees? Finally, we echo the point made in \cite{kelk2022deep} and elsewhere: can the analysis of branching rules, and reduction rules, be systematized somehow?

\section{Acknowledgements}

 We thank Simone Linz and Steve Chaplick for useful discussions. Ruben Meuwese was supported by the Dutch Research Council (NWO) KLEIN 1 grant \emph{Deep kernelization for phylogenetic discordance}, project number OCENW.KLEIN.305.
 
\bibliography{Bibliografie}{}
\bibliographystyle{plain}
	
\end{document}